\documentclass[preprint,11pt]{elsarticle}
\usepackage{amssymb}
\usepackage{booktabs}
\usepackage{url}
\usepackage{multirow}
\usepackage{hyperref}
\usepackage{amsthm}
\usepackage{graphicx}
\usepackage{subcaption}

\theoremstyle{remark}
\newtheorem{remark}{Remark}
\usepackage{doi}
\hypersetup{
colorlinks,
linkcolor={blue},
citecolor={blue},
urlcolor={red}
}

\usepackage{amsmath}
\usepackage{algorithm}
\usepackage{orcidlink}
\usepackage{color}
\usepackage{algpseudocode}
\usepackage{setspace}
\newtheorem{theorem}{Theorem}
\usepackage{lscape}

\newtheorem{corollary}{Corollary}
\usepackage[numbers]{natbib}
\setcitestyle{numbers,open={[},close={]}}
\usepackage[margin=2.8cm]{geometry}
\usepackage{lineno}
\makeatletter
\def\ps@pprintTitle{%
 \let\@oddhead\@empty
 \let\@evenhead\@empty
 \def\@oddfoot{\reset@font\hfil\thepage\hfil}
 \let\@evenfoot\@oddfoot
}
\makeatother
\begin{document}
\begin{frontmatter}

\title{Goodness-of-fit testing for the Pareto type-I distribution based on a mean residual life characterization}

\author[label1]{Shivshankar Nila\corref{cor1}\,\orcidlink{0009-0004-0465-5490}}
\ead{shivshankarnila1@gmail.com}
\author[label1]{Ishapathik Das} 
\cortext[cor1]{Corresponding author} 
\author[label1]{N. Balakrishna}

\address[label1]{Department of Mathematics and Statistics, Indian Institute of Technology Tirupati, Tirupati, India.} 

\doublespacing
\begin{abstract}
\noindent The statistical analysis of heavy-tailed data has received considerable attention because extreme observations frequently arise in many practical applications. The Pareto type-I distribution is a fundamental heavy-tailed model used in economics, finance, actuarial science, insurance, reliability, and extreme value analysis. In this paper, we propose novel goodness-of-fit tests for the Pareto distribution using a mean residual life characterization. The test statistic is constructed using U-statistic theory, and its asymptotic behaviour is established under both the null and alternative hypotheses. Its finite-sample performance is evaluated through Monte Carlo simulations using maximum-likelihood and method-of-moments estimation and compared with existing tests. The results show that the proposed test controls the nominal significance level and performs competitively in terms of power across a broad range of alternatives. Finally, the proposed methodology is illustrated using the Danish fire insurance loss and pollution datasets.
\end{abstract}

\begin{keyword} Goodness-of-fit test; Pareto type-I distribution; Mean residual life; Characterization; U-statistics
\end{keyword}
\end{frontmatter}

\doublespacing
\section{Introduction}\label{sec1}
\noindent Over the last decade, there has been increasing interest in modeling extreme events exhibiting heavy-tailed behaviour using extreme-value distributions \cite{coles2001introduction,nila2025modeling}, particularly for quantifying risk measures such as Value-at-Risk (VaR) and Expected Shortfall (ES). Among heavy-tailed models, the Pareto distribution is one of the most widely used and also plays an important role in extreme value theory; see \cite{beirlant2006statistics}. It was first introduced by \cite{pareto1897new} and has found extensive applications in economics \cite{piketty2014capital}, insurance claim modeling \cite{brazauskas2000robust,gay2004pricing}, finance, actuarial science, and reliability; see \cite{ismail2004simple}. The Pareto distribution has several generalizations, including the Pareto types II, III, and IV distributions, as well as the generalized Pareto distribution (GPD). A detailed discussion of these distributions and their interrelationships can be found in \cite{arnold}. It is also a well-established result in extreme value theory that, under suitable regularity conditions, the distribution of exceedances over a sufficiently high threshold converges to the GPD; see \cite{coles2001introduction,nila2026flexible,Pickands1975}.

Assessing whether observed data follow a specified distribution is an important problem in statistical analysis, and goodness-of-fit (GoF) tests provide a natural framework for this purpose. Classical GoF procedures include tests based on the empirical distribution function (EDF), such as the Kolmogorov--Smirnov (KS), Cramér--von Mises (CvM), and Anderson--Darling (AD) tests. An alternative approach is based on distributional characterizations; a characterization of a distribution is a property that uniquely identifies that distribution among all probability distributions \cite{nikitin2017tests}. Characterization-based GoF procedures use such identifying properties to construct test statistics for assessing departures from a specified assumed distribution. This approach has been widely studied in the literature; see \cite{galambos2006characterizations,nikitin2017tests}.

Several GoF tests have been proposed for the Pareto type I distribution. In particular, characterization-based procedures have been developed using different distributional properties; see \cite{allison2022distribution,ndwandwe2023testing,ngatchou2024classes}. Related tests based on different characterizations of the Pareto distribution have also been studied by \cite{akbari2020characterization}, \cite{bojana2016},\cite{obradovic2015goodness}, and \cite{volkova2016goodness}. 
In particular, different test constructions may offer varying power, robustness, or computational efficiency across scenarios.  Comprehensive reviews and comparisons of GoF tests for the Pareto distribution can be found in \cite{chu2019review,ndwandwe2023testing}. Recently, GoF tests for the Pareto type-I distribution based on Stein-type characterizations have been proposed by \cite{avhad2026goodness} and \cite{bhati2025new}. 
The Lomax distribution, also known as the Pareto type-II distribution, is widely used for modeling heavy-tailed data. Recently, a GoF test for this distribution was proposed by \cite{abhijithkrishnu2026goodness}.

In this paper, we propose a new GoF test for the Pareto type I distribution based on a characterization involving the mean residual life (MRL) function. It is well known that the MRL function characterizes the underlying distribution under suitable conditions; see \cite{arnold,laurent1974characterization}. 
\citet{cox1962renewal} (page 128) has the result that the conditional expectation \(E(X\mid X>x)\) for a positive random variable $X$ (when it exists)
characterizes the distribution of the random variable.
 Motivated by this result, we use the conditional expectation \(E(X\mid X>x)\) as the characterization quantity for constructing our goodness-of-fit test.
The MRL function for $X$ is defined as
\[
m(x)=E(X-x\mid X>x)
=\frac{1}{\bar{F}(x)}\int_x^\infty \bar{F}(t)\,dt,\qquad x>0,
\]
where $\bar{F}(x)=1-F(x)$ denotes the survival function. For the Pareto type I distribution with finite mean, the MRL function has a simple linear form, which provides a natural basis for constructing a GoF procedure.
The proposed test statistic is formulated using the theory of $U$-statistics \cite{lee2019u}, and its asymptotic distribution is derived under the null and alternative hypotheses. The proposed statistic admits a simple and computationally efficient form, facilitating its practical implementation.

The remainder of this paper is organized as follows. Section~\ref{sec2} outlines the MRL-based characterizations and describes the formulation of the test statistic along with its asymptotic properties under the null and alternative hypotheses. Section~\ref{SStudy} presents a comprehensive Monte Carlo simulation study to evaluate the empirical size and power of the proposed test under different parameter settings and estimation techniques and compares it with existing methods. Section~\ref{sec} illustrates the practical application of the proposed test using two real-world datasets. Finally, Section~\ref{conclusion} provides some concluding remarks and future directions, and the proofs of the main results are presented in the Appendix~\ref{proof:appendix}.

\section{Overview of characterization and test statistics}\label{sec2}
\noindent In this section, we introduce a new GoF test for the Pareto Type I distribution based on MRL characterizations, which can be used to characterize distributions under suitable conditions. The distribution function (DF) $F$ and probability density function (PDF) $f$ of the Pareto type-I distribution are given by
\begin{equation}
F(x)=1-\left(\frac{\sigma}{x}\right)^{\alpha}, \qquad
x\geq \sigma,\; \alpha>0,
\end{equation}
and
\begin{equation}
\label{pdf:Pareto}
f(x)=\frac{\alpha \sigma^{\alpha}}{x^{\alpha+1}}, \qquad
x\geq \sigma,\; \alpha>0,~~\text{respectively .}
\end{equation}
The MRL function for the Pareto type I distribution is defined by
\begin{equation}
m(x)=\frac{x}{\alpha-1},
\qquad x\geq \sigma,\quad \alpha>1,
\end{equation}
where $\sigma>0$ is the scale parameter. Throughout this paper, we assume that $\sigma=1$ and that the shape parameter $\alpha>1$, so that the distribution has a finite mean.  A detailed discussion can be found in \cite{ndwandwe2023testing}.
We denote by $\mathcal{P}:=\{P(\alpha):\alpha>1\}$ the family of Pareto type I distributions with shape parameter $\alpha>1$ and support $[1,\infty)$. We will now outline the characterization theorem employed in the formulation of our test. 
\begin{theorem}\label{thm0}
[\citet{10.1214/aos/1176342723}]
Let $X$ be a positive random variable whose mean exists, and has a Pareto distribution if, and only if,
\[E(X\mid X>t)=\mu+\beta t,\qquad \beta>1,\]
where $\mu$ and $\beta$ are constants.
\end{theorem}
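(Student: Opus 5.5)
We prove the two directions separately. The key tool is the survival function $\bar F$ of $X$ and the identity
\[
E(X\mid X>t)=t+\frac{1}{\bar F(t)}\int_t^\infty \bar F(u)\,du ,
\]
valid for every $t$ in the interior of the support with $\bar F(t)>0$. Because the mean exists, the integral is finite. The statement implicitly fixes the lower endpoint of the support; we take it to be $\sigma$ (equal to $1$ under the paper's convention). The relation is then required for $t\ge\sigma$.

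\emph{Necessity.} Let $X\sim P(\alpha)$ with $\alpha>1$. The Pareto survival function gives directly
\[
\int_t^\infty (\sigma/u)^{\alpha}\,du=\frac{t(\sigma/t)^{\alpha}}{\alpha-1},
\]
so $m(t)=t/(\alpha-1)$. Hence
\[
E(X\mid X>t)=\frac{\alpha}{\alpha-1}\,t ,
\]
which is of the stated form with $\mu=0$ and $\beta=\alpha/(\alpha-1)>1$. This direction is a routine calculation.

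\emph{Sufficiency.} Suppose $E(X\mid X>t)=\mu+\beta t$ for all $t$ in the support.

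\textbf{Step 1: reduce to an ODE.} Set $G(t)=\int_t^\infty\bar F(u)\,du$. Then $G'(t)=-\bar F(t)$, and the hypothesis becomes
\[
G(t)=\bigl(\mu+(\beta-1)t\bigr)\bar F(t)=-\bigl(\mu+(\beta-1)t\bigr)G'(t).
\]
This is a first-order linear ODE. Its solution is
\[
G(t)=C\bigl(\mu+(\beta-1)t\bigr)^{-1/(\beta-1)} .
\]
Differentiating once more gives
\[
\bar F(t)=C'\bigl(\mu+(\beta-1)t\bigr)^{-\beta/(\beta-1)} .
\]
This is a Pareto-type (Lomax/Pareto II) tail with shape $\alpha=\beta/(\beta-1)>1$, because $\beta>1$. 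We do not need to know a priori that $\bar F$ is differentiable. The right-hand side of the identity $G=(\mu+(\beta-1)t)\bar F$ forces $\bar F=G/(\mu+(\beta-1)t)$ with $G$ absolutely continuous, so $\bar F$ is continuous. Bootstrapping then gives differentiability.

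\textbf{Step 2: identify the type I form.} Evaluate the relation at the left endpoint $\sigma$ of the support. As $t\downarrow\sigma$ we have $E(X\mid X>t)\to E(X)$, and $\bar F(\sigma)=1$. Normalising $\bar F(\sigma)=1$ fixes the constant $C'$. To get exactly $\bar F(t)=(\sigma/t)^{\alpha}$ we need $\mu=0$, i.e. the linear relation must pass through the origin.

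This identification of the constants is the main obstacle. The characterisation of Hall and Wellner (the reference cited in the theorem) is really that linear mean residual life characterises the whole family exponential / Pareto II / rescaled beta. A Pareto type I law is obtained only when $\mu=0$, or equivalently when the support starts at $\sigma$ with $\mu+(\beta-1)\sigma=(\beta-1)\sigma$. In general $\mu\neq 0$ yields a shifted Pareto, i.e. Lomax.

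I would handle this as follows. Interpret ``Pareto distribution'' in the statement as the shifted/location--scale Pareto family, $\bar F(t)=\bigl(\theta/(t+\theta-\sigma)\bigr)^{\alpha}$ with $\theta=\mu/(\beta-1)+\sigma$. Then remark that, within the paper's model with support $[1,\infty)$, the null hypothesis corresponds precisely to $\mu=0$. This restricted claim is what is actually used to build the test statistic. Uniqueness of the solution follows because a first-order linear ODE with a given initial value $G(\sigma)=E(X)-\sigma$ has a unique solution.
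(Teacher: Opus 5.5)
The paper gives no argument for this statement; it only points to the cited source. Your proof is therefore a self-contained substitute along the standard route: turn the linear conditional mean into a first-order linear ODE and solve it. Working with the integrated tail $G(t)=\int_t^\infty\bar F(u)\,du$ rather than a density is a good choice. The identity $G=(\mu+(\beta-1)t)\bar F$ together with the Lipschitz continuity of $G$ supplies the regularity, and $\mu+(\beta-1)t=G(t)/\bar F(t)>0$ makes the powers well defined. Step~1 is fine and gives $\bar F(t)=C'\bigl(\mu+(\beta-1)t\bigr)^{-\beta/(\beta-1)}$ for $t\ge\sigma$.

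The gap is in Step~2, where you use $\bar F(\sigma)=1$ and $E(X\mid X>t)\to E(X)$ as $t\downarrow\sigma$. Neither follows from the hypotheses when $\sigma>0$, which is the type I case. (For $\sigma=0$, positivity of $X$ already rules out an atom.)
\begin{itemize}
\item The relation can only be imposed for $t\ge\sigma$, because on $(0,\sigma)$ the conditional mean is the constant $E(X)$.
\item For $t\ge\sigma$, the event $\{X>t\}$ never sees a possible atom at $\sigma$.
\item Your continuity bootstrap only rules out atoms in $(\sigma,\infty)$. It says nothing about the jump $P(X=\sigma)=1-\bar F(\sigma)$.
\end{itemize}
The conclusion itself fails in this situation. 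Let $X=1$ with probability $p\in(0,1)$ and $X\sim P(\alpha)$ otherwise. Then $E(X\mid X>t)=\alpha t/(\alpha-1)$ for every $t\ge 1$, yet $X$ is not Pareto.

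You therefore need one of two fixes:
\begin{itemize}
\item add the hypothesis that $X$ has no atom at its lower endpoint, for example that $X$ is continuous, as in Theorem~\ref{thm1}; or
\item weaken the conclusion to $P(X>t\mid X>\sigma)=\bigl(\theta/(t+\theta-\sigma)\bigr)^{\alpha}$ for $t\ge\sigma$.
\end{itemize}

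With that fix the sufficiency argument is complete. One smaller point remains. Since you end up reading ``Pareto'' as the location--scale family, your necessity part should cover that whole family, not just $\mu=0$. For $\bar F(t)=\bigl(\theta/(t+\theta-\sigma)\bigr)^{\alpha}$, the same routine integration gives $E(X\mid X>t)=\frac{\theta-\sigma}{\alpha-1}+\frac{\alpha}{\alpha-1}\,t$. This has the required form with $\beta=\alpha/(\alpha-1)>1$.
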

\begin{proof}
    For the proof, refer to \citet{10.1214/aos/1176342723}.
\end{proof}
\noindent \citet{10.1214/aos/1176342723} noted that the Pareto law may be subdivided into two types depending on the location parameter, namely, Pareto type-I and Pareto type-II. 
In this paper, we restrict attention to the Pareto type-I distribution under the assumptions stated above.
For simplicity, the Pareto type I distribution will hereafter be referred to as the Pareto distribution. Therefore, we have the following characterization.
\begin{theorem}\label{thm01}
Let \(X\) be a positive random variable with \(E(X)<\infty\).
Then \(X\) follows a Pareto distribution with parameter
\(\alpha>1\) if and only if
\[E(X\mid X>t)=\frac{\alpha}{\alpha-1}t,\qquad t>1.\]
\end{theorem}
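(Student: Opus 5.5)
The argument has two directions, and both reduce to the survival function $\bar F$. The forward direction is a direct computation. For the converse, I would turn the conditional-expectation identity into an integral equation for $\bar F$ and solve it. This bypasses the general form in Theorem~\ref{thm0}, although it agrees with it in the special case $\mu=0$, $\beta=\alpha/(\alpha-1)>1$.

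\emph{Necessity.} Suppose $X\sim P(\alpha)$ with $\alpha>1$ and $\sigma=1$. For $t>1$,
\[
E(X\mid X>t)=\frac{1}{\bar F(t)}\int_t^\infty x\,\frac{\alpha}{x^{\alpha+1}}\,dx=t^{\alpha}\cdot\frac{\alpha\, t^{1-\alpha}}{\alpha-1}=\frac{\alpha}{\alpha-1}\,t .
\]
Equivalently, $E(X\mid X>t)=t+m(t)$ with $m(t)=t/(\alpha-1)$ as stated above.

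\emph{Sufficiency.} Assume the identity holds for all $t>1$ and write $c=\alpha/(\alpha-1)>1$.

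\begin{enumerate}
\item Record that $\bar F(t)>0$ for every $t>1$. This is implicit in the conditional expectation being defined on all of $(1,\infty)$.
\item Integrate by parts to get
\[
E(X\,\mathbf 1\{X>t\})=t\bar F(t)+\int_t^\infty \bar F(u)\,du .
\]
This uses $E(X)<\infty$, which gives $s\bar F(s)\to0$ as $s\to\infty$.
\item Substitute into the hypothesis. Setting $G(t)=\int_t^\infty \bar F(u)\,du$, the identity becomes $G(t)=(c-1)t\bar F(t)$, that is,
\[
G(t)=-\frac{t}{\alpha-1}\,G'(t).
\]
Here $G$ is absolutely continuous with a.e.\ derivative $-\bar F$. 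Because $\bar F>0$, $G$ is strictly positive on $(1,\infty)$.
\item Solve the ODE. From $(\log G)'=-(\alpha-1)/t$ we get $G(t)=K t^{-(\alpha-1)}$ on $(1,\infty)$. Then $\bar F(t)=G(t)/\big((\alpha-1)t\big)=K' t^{-\alpha}$, and the equality holds for every $t$, not just a.e., since it comes directly from the identity in step~3.
\end{enumerate}

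The main obstacle is pinning down the constant $K'$ and the behaviour at the left endpoint. Survival functions are right-continuous, so letting $t\downarrow1$ gives $P(X>1)=K'$. The hypothesis says nothing about mass at or below $1$, so $K'=1$ does not follow from the identity alone.

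To settle this I would use the paper's standing convention that the family $\mathcal P$ has support $[1,\infty)$, i.e.\ $X\ge1$ a.s. With that convention, $K'=1$ is forced only if there is no atom at $1$: one needs $P(X>1)=P(X\ge1)=1$, which holds exactly when $P(X=1)=0$. So I would state explicitly that the characterization is meant within distributions satisfying $P(X>1)=1$. This is the usual normalization $\sigma=1$. Under it $\bar F(t)=t^{-\alpha}$ for $t\ge1$, which is $P(\alpha)$.

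Alternatively, the converse can be cited from Theorem~\ref{thm0} with $\mu=0$ and $\beta=\alpha/(\alpha-1)$. That citation still requires the same normalization to identify the scale as $\sigma=1$.
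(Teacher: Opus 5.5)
Your proof is correct, but it takes a different route from the paper. The paper gives no argument of its own for Theorem~\ref{thm01}: it defers to the cited literature. In Appendix~A1 it reduces the equivalent Theorem~\ref{thm1} to $E(X\mid X>t)=ct$ and then invokes Arnold's characterization with $t_0=1$.

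You instead derive the converse directly. The identity $E(X\mathbf{1}\{X>t\})=t\bar F(t)+G(t)$ turns the hypothesis into the first-order linear equation $G(t)=-tG'(t)/(\alpha-1)$, whose positive solutions are exactly $G(t)=Kt^{1-\alpha}$. The citation buys brevity. Your derivation is elementary and self-contained, and it isolates precisely what the identity on $(1,\infty)$ cannot see, namely the constant $P(X>1)$.

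Your concern about $P(X>1)$ is a genuine defect of the statement as written, not of your proof. Let $X$ be uniform on $(0,1)$ (or equal to $1$) with probability $1/2$, and $P(\alpha)$-distributed with probability $1/2$. Then $X$ is positive with finite mean and satisfies $E(X\mid X>t)=\alpha t/(\alpha-1)$ for every $t>1$, yet $X\not\sim P(\alpha)$.

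The paper imposes the needed normalization only in Theorem~\ref{thm1}: continuity plus support $[1,\infty)$ together give $P(X>1)=1$. Its appeal to Arnold with $t_0=1$ likewise determines the law only once this is known. Your explicit assumption $P(X>1)=1$ is exactly the right repair.

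There is one harmless slip in step 4. From $G(t)=t\bar F(t)/(\alpha-1)$ you get $\bar F(t)=(\alpha-1)G(t)/t$, not $G(t)/\big((\alpha-1)t\big)$. So $K'=(\alpha-1)K$, and the conclusion $\bar F(t)=K't^{-\alpha}$ is unaffected. This same relation also shows $\bar F$ is continuous on $(1,\infty)$. Hence $G$ is $C^1$ there, and your ODE holds classically rather than merely almost everywhere.
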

\begin{proof}
    For the proof, refer to \citet{10.1214/aos/1176342723}.
\end{proof}
We now obtain the following characterization.
\begin{theorem}\label{thm1}
Let $X$ be a continuous random variable with support $[1,\infty)$ and
$E(X)<\infty$. Then $X$ has a Pareto distribution with shape
parameter $\alpha>1$ if and only if
\[
E\left[\left(X-t-\frac{t}{\alpha-1}\right)I(X>t)\right]=0,
\qquad t> 1.
\]
\end{theorem}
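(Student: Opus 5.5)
The plan is to reduce the statement to Theorem~\ref{thm01} by rewriting the moment condition as a statement about the conditional mean $E(X\mid X>t)$. Since $X$ is continuous with support $[1,\infty)$, we have $\bar F(t)=P(X>t)>0$ for every $t>1$. Indeed, $\bar F(t)=0$ for some finite $t$ would contradict the support being all of $[1,\infty)$. For such $t$ the definition of conditional expectation gives
\[
E\left[\left(X-t-\tfrac{t}{\alpha-1}\right)I(X>t)\right]
= E[X I(X>t)] - \tfrac{\alpha}{\alpha-1}\,t\,\bar F(t)
= \bar F(t)\left(E(X\mid X>t)-\tfrac{\alpha}{\alpha-1}\,t\right).
\]
All terms are finite because $E(X)<\infty$, which gives $E[XI(X>t)]\le E(X)$. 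This identity is the only computation needed.

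For the necessity direction, suppose $X\sim P(\alpha)$ with $\alpha>1$. Theorem~\ref{thm01} then gives $E(X\mid X>t)=\frac{\alpha}{\alpha-1}t$ for $t>1$, so the bracket vanishes and the expectation is zero. One can also check this directly: $\int_t^\infty x\,\alpha x^{-\alpha-1}\,dx=\frac{\alpha}{\alpha-1}t^{1-\alpha}=\frac{\alpha}{\alpha-1}\,t\,\bar F(t)$.

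For the sufficiency direction, suppose the expectation is zero for all $t>1$. Since $\bar F(t)>0$, dividing by it gives $E(X\mid X>t)=\frac{\alpha}{\alpha-1}t$ for all $t>1$. Theorem~\ref{thm01}, or Theorem~\ref{thm0} with $\mu=0$ and $\beta=\alpha/(\alpha-1)>1$, then shows that $X$ is Pareto with shape $\alpha$.

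As a self-contained alternative to citing the characterization, I would argue as follows. Set $G(t)=\int_t^\infty \bar F(u)\,du$. The condition is equivalent to $G(t)+t\bar F(t)=\frac{\alpha}{\alpha-1}t\bar F(t)$, that is, $(\alpha-1)G(t)=t\bar F(t)=-tG'(t)$. Solving this ODE gives $G(t)=Ct^{1-\alpha}$, hence $\bar F(t)=C(\alpha-1)t^{-\alpha}$. Continuity together with support starting at $1$ forces $\bar F(1)=1$, so $C(\alpha-1)=1$ and $\bar F(t)=t^{-\alpha}$.

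The main obstacle is justifying the division by $\bar F(t)$ and the boundary behaviour at $t=1$. The positivity of $\bar F(t)$ on $t>1$ follows from the support assumption. Pinning down $\bar F(1^+)=1$ uses continuity at the left endpoint. Both points are handled by the hypotheses, so no serious difficulty is expected.
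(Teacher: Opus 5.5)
Your proof is correct and follows essentially the same route as the paper's proof in Appendix A1: you rewrite the expectation as $\bar F(t)\bigl(E(X\mid X>t)-\frac{\alpha}{\alpha-1}t\bigr)$, divide by $\bar F(t)$, and invoke the conditional-mean characterization of the Pareto law, with necessity coming from the linear mean residual life $t/(\alpha-1)$. Your explicit justification that $\bar F(t)>0$ for $t>1$ is a useful addition that the paper glosses over, as is your self-contained ODE argument $(\alpha-1)G(t)=-tG'(t)$ with $\bar F(1)=1$ fixed by continuity (which rules out an atom at $1$ that the condition on $t>1$ alone cannot detect); together they make rigorous what the paper leaves to its citation of Arnold's Section 3.9.1.
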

\begin{proof}
The proof directly follows from Theorems~\ref{thm0} and~\ref{thm01}. An alternative proof is provided in Appendix~\ref{proof:thm1}.
\end{proof}
The characterization established in Theorem~\ref{thm1} can be used to construct a GoF test, as it uniquely identifies the underlying distribution through an expectation identity. Based on this characterization, we develop the GoF test.
Let $X,X_1,X_2,\ldots,X_n$ be a random sample from an unknown distribution $F$; we are interested in the following composite hypothesis testing problem:
\[H_0:F\in\mathcal{P}\quad \text{against} \quad H_1:F\notin\mathcal{P}.
\]
 For testing the above hypothesis, we define a departure measure that discriminates between the null and alternative hypotheses. The departure measure $\Delta(F)$ is given by
\begin{equation}
\Delta(F)=\int_{1}^{\infty}
E\left[\left(X-t-\frac{t}{\alpha-1}\right)I(X>t)\right]\,dF(t).
\end{equation}

According to Theorem~\ref{thm1}, $\Delta(F)=0$ under the null hypothesis, and under a fixed alternative for which $\Delta(F)\neq 0$, the quantity $\Delta(F)$ measures the departure from the null hypothesis. To construct a test statistic using the theory of $U$-statistics, we express $\Delta(F)$ in terms of expectations of functions of random variables. Using Theorem~\ref{thm1}, we obtain
\begin{align}
    \Delta(F) &=\int_1^\infty \left(E\left[\left(X-t-\frac{t}{\alpha-1}\right)I(X>t)\right]\right)dF(t) \nonumber\\
    &=\int_1^\infty \int_1^\infty (x-t)I(x>t)\,dF(x)dF(t)
    -\int_1^\infty \int_1^\infty \frac{t}{\alpha-1}I(x>t)\,dF(x)dF(t) \nonumber\\
    &=\int_1^\infty \int_t^\infty (x-t)\,dF(x)dF(t)
    -\int_1^\infty \int_t^\infty \frac{t}{\alpha-1}\,dF(x)dF(t) \nonumber\\
    &= E\{(X_2-X_1) I(X_2\geq X_1)\}
    - \frac{1}{\alpha-1}E\{X_1 I(X_2\geq X_1)\}.
\end{align}
Next, we obtain the test statistic using the theory of $U$-statistics. For this, we consider a symmetric kernel, $h_1(X_1,X_2)$, of degree 2, as
$h_1(X_1,X_2)=\frac{|X_1-X_2|}{2}$.
Then a $U$-statistic defined by
\[U_1=\frac{2}{n(n-1)}\sum_{i=1}^{n}
\sum_{j=1,j<i}^{n}
h_1(X_i,X_j),\]
is an unbiased estimator of
\[E\left(\frac{|X_1-X_2|}{2}\right)=\theta_1.\]
\noindent Similarly, consider the symmetric kernel~$
h_2(X_1,X_2)=\frac{\min(X_1,X_2)}{2}$, 
and the corresponding U-statistic is
\[U_2=
\frac{2}{n(n-1)}
\sum_{i=1}^{n}
\sum_{j=1,j<i}^{n}
h_2(X_i,X_j),\]
which is an unbiased estimator of
\[E\left(\frac{\min(X_1,X_2)}{2}\right)=\theta_2.\]
Here $\alpha$ is an unknown parameter and, for the Pareto distribution, we consider the maximum likelihood estimator (MLE) and the method of moments estimator (MME) of the shape parameter $\alpha$, and are given as
\[\widehat{\alpha}_{\mathrm{MLE}}=
\frac{n}{\sum_{i=1}^{n}\log X_i},
\qquad
\widehat{\alpha}_{\mathrm{MME}}
=\frac{\bar X}{\bar X-1},\qquad~\text{where}~~
\bar X=\frac{1}{n}\sum_{i=1}^{n}X_i.\]
Various consistent estimators for $\alpha$, have been proposed in the literature; details are discussed in \cite{quandt1964old}.
Replacing $\alpha$ with its consistent estimator $\widehat{\alpha}$, the test statistic is given by
\begin{equation}
\label{Test_Statistics}
\widehat{\Delta}=U_1-\frac{1}{\widehat{\alpha}-1}U_2,
\end{equation}
For sufficiently large values of $|\widehat{\Delta}|$, we reject the null hypothesis $H_0$ against the alternative hypothesis $H_1$.
\begin{remark}
Under standard regularity conditions, $\widehat{\alpha}$ can be considered the U-statistic, provided that the underlying distribution possesses finite moments of the required order. The consistency and asymptotic normality of the U-statistic-based quantities rely on these moment assumptions. Consequently, the applicability of the proposed procedure is restricted to distributions satisfying appropriate moment conditions.
\end{remark}
Next, we study the asymptotic properties of the test statistic. Since $U_1$ and $U_2$ are $U$-statistics, they are consistent estimators of
$E\left(\frac{|X_1-X_2|}{2}\right)$ and
$E\left(\frac{\min(X_1,X_2)}{2}\right)$, respectively, see \cite{lehmann1951consistency}. 
When $\widehat{\alpha}$ is the MLE of $\alpha$, by the strong law of large numbers and standard properties of the MLE,~ $\widehat{\alpha}\xrightarrow{\mathrm{a.s.}}\alpha$ as $n\to\infty$~ (\cite{avhad2026goodness}).
Hence, the following result is straightforward.

\begin{theorem}\label{thm2}
Let $\widehat{\alpha}$ ~be a consistent estimator of $\alpha$,~ under $H_1$, as $n\to\infty$,
$\widehat{\Delta}$ converges in probability to $\Delta(F)$, where
\[\Delta(F)=\theta_1-\frac{\theta_2}{\alpha-1}.\]
\end{theorem}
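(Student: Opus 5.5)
The plan is to combine the weak law for $U$-statistics with the continuous mapping theorem, and then to check that the limit coincides with the departure measure $\Delta(F)$ defined earlier.

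\textbf{Step 1 (consistency of the two $U$-statistics).} Assume $E_F(X)<\infty$. Then both kernels are integrable, since
\[
E|h_1(X_1,X_2)|\le E(X_1+X_2)/2=E(X)\quad\text{and}\quad E|h_2(X_1,X_2)|\le E(X)/2 .
\]
By the strong law of large numbers for $U$-statistics (Hoeffding; see \cite{lehmann1951consistency,lee2019u}), $U_1\to\theta_1$ and $U_2\to\theta_2$ almost surely, and hence in probability, under any $F$ with finite mean.

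\textbf{Step 2 (plug-in of $\widehat\alpha$).} By hypothesis, $\widehat\alpha\to\alpha$ in probability with $\alpha>1$. Under $H_1$ the symbol $\alpha$ is to be read as the probability limit of $\widehat\alpha$. For the MLE this limit is $1/E\log X$; for the MME it is $E(X)/(E(X)-1)$. We assume this limit exceeds $1$. The map $g(u_1,u_2,a)=u_1-u_2/(a-1)$ is continuous at $(\theta_1,\theta_2,\alpha)$ because $\alpha\neq 1$. The joint vector $(U_1,U_2,\widehat\alpha)$ converges in probability to $(\theta_1,\theta_2,\alpha)$, since componentwise convergence in probability gives joint convergence. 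The continuous mapping theorem therefore yields
\[
\widehat\Delta=g(U_1,U_2,\widehat\alpha)\xrightarrow{P}\theta_1-\frac{\theta_2}{\alpha-1}.
\]
One small point needs care: $\widehat\alpha-1$ could vanish. Because $P(|\widehat\alpha-1|<(\alpha-1)/2)\to 0$, this event is asymptotically negligible.

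\textbf{Step 3 (identification with $\Delta(F)$).} It remains to show that $\theta_1-\theta_2/(\alpha-1)$ equals the expression for $\Delta(F)$ derived earlier. For continuous $F$ and i.i.d. $X_1,X_2$, symmetry gives
\[
E\{(X_2-X_1)I(X_2\ge X_1)\}=\tfrac12 E|X_1-X_2|=\theta_1 .
\]
Similarly, $X_1$ is the minimum on the event $\{X_2\ge X_1\}$, so
\[
E\{X_1I(X_2\ge X_1)\}=\tfrac12E\min(X_1,X_2)=\theta_2 .
\]
Substituting these into the last line of the displayed derivation of $\Delta(F)$ gives $\Delta(F)=\theta_1-\theta_2/(\alpha-1)$.

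\textbf{Main obstacle.} The only delicate point is the meaning of $\alpha$ under $H_1$. Under an alternative, $\widehat\alpha$ is consistent for a pseudo-true value rather than a true Pareto parameter, and one must ensure that this value exceeds $1$ and that the finite-mean condition holds. I would state these as standing assumptions. Everything else is routine.
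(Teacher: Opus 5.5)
Your proposal is correct and follows the paper's proof: the $U$-statistic law of large numbers gives consistency of $U_1$ and $U_2$, and combining it with $\widehat{\alpha}\xrightarrow{P}\alpha$ and the continuous mapping theorem gives the limit (the paper applies the mapping theorem to $(x-1)^{-1}$ and then combines, while you apply it jointly to $g(u_1,u_2,a)$, which is the same argument). Your Step~3 check that $\theta_1-\theta_2/(\alpha-1)$ agrees with the integral form of $\Delta(F)$ is extra care the paper omits, since it simply takes $\Delta(F)=\theta_1-\theta_2/(\alpha-1)$ as given; so is your point that under $H_1$ one must read $\alpha$ as the probability limit of $\widehat{\alpha}$, assumed to exceed $1$, with $E(X)<\infty$.
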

\begin{proof}
The proof can be found in Appendix~\ref{proof:thm2}.
\end{proof}
\noindent
To derive the asymptotic distribution of the proposed test statistic, let $\widehat{\alpha}$ be a consistent estimator of $\alpha$ admitting the asymptotic linear representation
\[\sqrt{n}\left(\widehat{\alpha}-\alpha\right)=
\frac{1}{\sqrt{n}}\sum_{j=1}^{n}\ell(X_j)+o_P(1),\]
where $\ell$ denotes the corresponding influence function satisfying $E[\ell(X_1)]=0$ and $E[\ell^2(X_1)]<\infty$. In particular, for the MLE, the associated influence function is
$\ell(x)=\alpha\left(1-\alpha\log x\right)$.
Thus, the MLE satisfies
\[\sqrt{n}\left(\widehat{\alpha}_{\mathrm{MLE}}-\alpha\right)
=\frac{1}{\sqrt{n}}\sum_{j=1}^{n}\alpha\left(1-\alpha\log X_j\right)
+o_P(1).\]
For more, see \cite{avhad2026goodness,henze2024asymptotic,lee2019u}.

\begin{theorem}\label{thm3}
Under the standing assumptions, as $n\to\infty$,
$\sqrt{n}\left(\widehat{\Delta}-\Delta(F)\right)
\xrightarrow{d} N(0,\sigma^2)$,
where
\begin{equation}
\label{u_var}
\sigma^2=
\operatorname{Var}\!\left[
2h_1^{(1)}(X_1)
-\frac{2}{\alpha-1}h_2^{(1)}(X_1)
+\frac{\theta_2}{(\alpha-1)^2}\ell(X_1)
\right],
\end{equation}
with
\[h_i^{(1)}(X_1)=E\!\left[h_i(X_1,X_2)\mid X_1\right]-\theta_i,
\qquad i=1,2,\]
and
$\theta_1=E[h_1(X_1,X_2)]$ and
$\theta_2=E[h_2(X_1,X_2)]$.
\end{theorem}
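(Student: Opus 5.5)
The plan is to linearise $\widehat{\Delta}$ around $(\theta_1,\theta_2,\alpha)$. We write
\[
\widehat{\Delta}-\Delta(F)=(U_1-\theta_1)-\frac{1}{\alpha-1}(U_2-\theta_2)-\theta_2\left(\frac{1}{\widehat{\alpha}-1}-\frac{1}{\alpha-1}\right)-(U_2-\theta_2)\left(\frac{1}{\widehat{\alpha}-1}-\frac{1}{\alpha-1}\right),
\]
and treat each piece separately. The standing assumptions are taken to include finite second moments $E[h_i^2(X_1,X_2)]<\infty$ for $i=1,2$ (both are implied by $E X_1^2<\infty$) and the linear representation of $\widehat{\alpha}$ with influence function $\ell$.

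First, we apply Hoeffding's decomposition to each $U$-statistic. This gives
\[
\sqrt{n}(U_i-\theta_i)=\frac{2}{\sqrt n}\sum_{j=1}^n h_i^{(1)}(X_j)+o_P(1),
\]
since the degenerate remainder has variance $O(n^{-2})$.

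Second, we apply the delta method to $g(a)=1/(a-1)$, whose derivative at $\alpha$ is $g'(\alpha)=-1/(\alpha-1)^2$. Because $\sqrt n(\widehat{\alpha}-\alpha)=O_P(1)$, we get
\[
\sqrt n\bigl(g(\widehat{\alpha})-g(\alpha)\bigr)=-\frac{1}{(\alpha-1)^2}\,\frac{1}{\sqrt n}\sum_{j}\ell(X_j)+o_P(1).
\]
The cross term $\sqrt n\,(U_2-\theta_2)\bigl(g(\widehat{\alpha})-g(\alpha)\bigr)$ is the product of an $O_P(1)$ quantity and an $o_P(1)$ quantity, because $U_2\to\theta_2$ and $g(\widehat{\alpha})\to g(\alpha)$ in probability; the latter uses $\alpha>1$, so $g$ is continuous there. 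Hence the cross term is $o_P(1)$.

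Collecting the pieces yields the i.i.d. representation
\[
\sqrt n(\widehat{\Delta}-\Delta(F))=\frac{1}{\sqrt n}\sum_{j=1}^n\Bigl[2h_1^{(1)}(X_j)-\frac{2}{\alpha-1}h_2^{(1)}(X_j)+\frac{\theta_2}{(\alpha-1)^2}\ell(X_j)\Bigr]+o_P(1).
\]
Each summand has mean zero, since $E h_i^{(1)}(X_1)=0$ and $E\ell(X_1)=0$, and has finite variance. The classical CLT together with Slutsky's lemma then gives the normal limit with $\sigma^2$ as in \eqref{u_var}.

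The main obstacle is making sense of $\alpha$ under $H_1$. There, $\alpha$ must be read as the probability limit of $\widehat{\alpha}$, and the assumed linear representation must hold around that limit with $E\ell=0$ and $E\ell^2<\infty$. For the MLE this limit is $\alpha=1/E\log X$, and the representation follows from the delta method applied to $\overline{\log X}$. We also need this $\alpha>1$, so that $g$ is differentiable at it.

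A second check is the moment condition. Under the Pareto null, $E X^2<\infty$ requires $\alpha>2$, which is what makes $\operatorname{Var} h_i^{(1)}<\infty$. I would state this requirement explicitly as part of the standing assumptions rather than try to avoid it.
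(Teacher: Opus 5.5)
Your proposal is correct and follows essentially the same route as the paper: the same decomposition of $\widehat{\Delta}-\Delta(F)$, Hoeffding projections for $U_1$ and $U_2$, a first-order expansion of $1/(\widehat{\alpha}-1)$ combined with the linear representation of $\widehat{\alpha}$, and the classical CLT. Your explicit $o_P(1)$ cross term $\sqrt{n}(U_2-\theta_2)\bigl(g(\widehat{\alpha})-g(\alpha)\bigr)$ does the job of the paper's Slutsky step that replaces $U_2$ by $\theta_2$, and your added remarks spell out what the paper's ``standing assumptions'' leave implicit: $E X_1^2<\infty$ (hence $\alpha>2$ under $H_0$) is needed for $\operatorname{Var} h_1^{(1)}(X_1)<\infty$, and under $H_1$ the parameter $\alpha$ must be read as the probability limit of $\widehat{\alpha}$ and must exceed $1$.
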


\begin{proof}
The proof can be found in Appendix~\ref{proof:thm3}.
\end{proof}

\begin{corollary}
\label{corl1}
Under $H_0$, as $n\to\infty$, 
$\sqrt{n}\,\widehat{\Delta}
\xrightarrow{d}
N(0,\sigma_0^2)$,
where $\sigma_0^2$ denotes the asymptotic variance obtained by evaluating \eqref{u_var} under $H_0$.
\end{corollary}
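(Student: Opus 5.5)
The plan is to specialize Theorem~\ref{thm3} to the null hypothesis. Two things must be checked: that $\Delta(F)=0$ under $H_0$, and that the standing assumptions of Theorem~\ref{thm3} hold there, namely finite second moments of the kernels and a valid asymptotic linear representation of $\widehat{\alpha}$. Granting both, Slutsky's theorem gives $\sqrt{n}\,\widehat{\Delta}=\sqrt{n}(\widehat{\Delta}-\Delta(F))\xrightarrow{d}N(0,\sigma_0^2)$, where $\sigma_0^2$ is \eqref{u_var} evaluated at $F=P(\alpha)$.

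\textbf{Step 1: $\Delta(F)=0$ under $H_0$.} By Theorem~\ref{thm1}, the inner expectation $E[(X-t-t/(\alpha-1))I(X>t)]$ vanishes for every $t>1$. Integrating against $dF(t)$ gives $\Delta(F)=0$. As a check, compute directly with $X_1,X_2$ i.i.d.\ $P(\alpha)$. Since $\min(X_1,X_2)\sim P(2\alpha)$, we get $\theta_2=\tfrac12\cdot\frac{2\alpha}{2\alpha-1}$. Using $|X_1-X_2|=X_1+X_2-2\min(X_1,X_2)$, we get $\theta_1=\frac{\alpha}{\alpha-1}-\frac{2\alpha}{2\alpha-1}$. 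Then $\theta_1-\theta_2/(\alpha-1)=\frac{\alpha}{\alpha-1}\bigl(1-\frac{2\alpha-1}{2\alpha-1}\bigr)=0$.

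\textbf{Step 2: moment conditions.} Under $P(\alpha)$ the projections $h_1^{(1)},h_2^{(1)}$ are bounded by linear functions of $X_1$. They are computed as
\[
E[\min(x,X_2)]=\frac{\alpha-x^{1-\alpha}}{\alpha-1}
\qquad\text{and}\qquad
E|x-X_2|=x-\frac{\alpha}{\alpha-1}+\frac{2x^{1-\alpha}}{\alpha-1}.
\]
Hence $\operatorname{Var}(h_i(X_1,X_2))<\infty$ requires $E X^2<\infty$, that is, $\alpha>2$. For the MLE, $\ell(x)=\alpha(1-\alpha\log x)$ has all moments, because $\log X\sim\mathrm{Exp}(\alpha)$. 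The linear representation follows from the delta method applied to $1/\overline{\log X}$. For the MME, the delta method applied to $g(\bar X)=\bar X/(\bar X-1)$ gives $\ell(x)=g'(\mu)(x-\mu)$, with $\mu=\alpha/(\alpha-1)$; this again needs $\alpha>2$. The corollary should therefore be read as holding on the subfamily where these moments exist.

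\textbf{Step 3: conclusion.} Invoke Theorem~\ref{thm3} with $\Delta(F)=0$. Then $\sigma_0^2$ follows by substituting the explicit projections and $\theta_2=\alpha/(2\alpha-1)$ into \eqref{u_var}. The variance of $X$ and of $x^{1-\alpha}$ and the covariances with $\ell(X)$ are all elementary Pareto integrals, so $\sigma_0^2$ can be given in closed form as a function of $\alpha$, if desired.

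The main obstacle is Step 2: making the standing assumptions precise under $H_0$. The kernel $|X_1-X_2|$ needs second moments, which forces the restriction $\alpha>2$ beyond the paper's $\alpha>1$. The delta-method linearization of $1/(\widehat{\alpha}-1)$ around $\alpha$ must also be justified. Once these are secured, the corollary is a direct specialization of Theorem~\ref{thm3}.
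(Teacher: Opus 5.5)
Your proposal is correct and takes the same route as the paper: under $H_0$ one has $\Delta(F)=0$, which you also confirm directly from $\theta_1=\frac{\alpha}{\alpha-1}-\frac{2\alpha}{2\alpha-1}$ and $\theta_2=\frac{\alpha}{2\alpha-1}$, and the result then follows from the linear representation in Theorem~\ref{thm3} and the central limit theorem. Your caveat that $\alpha>2$ is needed is a genuine sharpening of the paper's stated $\alpha>1$: the projection $2h_1^{(1)}(x)=x+O(1)$ is not cancelled by the $\ell$-term for either the MLE or the MME, so $\sigma_0^2=\infty$ when $\alpha\le 2$; note, though, that $h_2$ has a finite second moment for every $\alpha>1$ because $\min(X_1,X_2)\sim P(2\alpha)$.
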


\begin{proof}
The proof can be found in Appendix~\ref{proof:corl1}.
\end{proof}

The asymptotic critical region for the proposed test can be obtained using Corollary~\ref{corl1}. We reject the null hypothesis
$H_0$ against the alternative hypothesis $H_1$ at a significance level $\delta$, if
\[\frac{\sqrt{n}\,|\widehat{\Delta}|}{\widehat{\sigma}_0}>
Z_{\delta/2},\]
where $Z_{\delta}$ is the upper $\delta$-percentile point of a standard normal distribution.
Since it is difficult to obtain a consistent estimator $\widehat{\sigma}_0^{\,2}$ of $\sigma_0^2$ and the corresponding critical values for $\widehat{\Delta}$, we follow the simulated critical region (SCR) approach of \cite{xavier2025goodness} to determine the critical values of the test. It minimises the dependence on the asymptotic critical value. Specifically, lower and upper quantiles are established based on the exact distribution, ensuring lower $(c_1)$ and upper $(c_2)$ quantiles in such a way that
\[P(\widehat{\Delta}<c_1)=P(\widehat{\Delta}>c_2)=\delta/2.\]
The finite sample performance of the test is evaluated through a Monte Carlo simulation
study and the results are reported in Section \ref{SStudy}.

\section{Simulation study}\label{SStudy}
\noindent In this section, we evaluate the finite-sample performance of the proposed test procedure by conducting an extensive Monte Carlo simulation study using \texttt{R} software. To demonstrate the competitiveness and effectiveness of our newly developed method compared with existing methods for the Pareto distribution, the following GoF tests are considered for comparison.
 \begin{itemize} 
\item Based on a Stein-type  characterization of the Pareto distribution, \cite{avhad2026goodness} proposed an integral-type GoF test. The test statistic is given by
\begin{equation*}
\widehat{G}_{I}= \left((\widehat{\alpha}+1)U+\frac{\widehat{\alpha}}{2}
\right),
\end{equation*}
\begin{equation*}
    U = \dfrac{2}{n(n-1)} \sum_{i=1  }^{n} \sum_{j < i, j=1}^{n} \frac{1}{2}
\left[\frac{X_j}{X_i}I{X_j<X_i}+
\frac{X_i}{X_j}I{X_i<X_j}
-\frac{1}{X_i}-\frac{1}{X_j}
\right]
\end{equation*}
The null hypothesis is rejected for large values of
$|\widehat{\Delta}_{I}|$.
\item \cite{avhad2026goodness} also proposed a Cramér-von Mises type test based on the same characterization. Let the notation  $
a_{ij}=\min(X_i,X_j)-1,
\qquad
b_{ij}=\max(X_i,X_j)$.

\[U_1=\frac{1}{\binom{n}{3}}
\sum_{1\le i<j<k\le n}
\left(\frac{a_{ik}a_{jk}}{X_iX_j}
+
\frac{a_{ij}a_{jk}}{X_iX_k}
+
\frac{a_{ij}a_{ik}}{X_jX_k}
\right),\qquad U_3
=\frac1n\sum_{i=1}^{n}X_i^{-1}, ~\text{and}\]
\[U_2=
\frac{1}{\binom{n}{3}}
\sum_{1\le i<j<k\le n}
\left(
\frac{b_{ij}}{X_k}I(b_{ij}\le X_k)
+
\frac{b_{ik}}{X_j}I(b_{ik}\le X_j)
+
\frac{b_{jk}}{X_i}I(b_{jk}\le X_i)
\right).
\]

Then the test statistic is

\[\widehat{G}_{M}=
(\widehat{\alpha}+1)^2U_1
-(\widehat{\alpha}+1)(U_2-U_3)
-\frac{2\widehat{\alpha}+1}{3}.\]
Here, $\widehat{\alpha}$ denotes a consistent estimator of the Pareto shape parameter $\alpha$, obtained either by the MLE or the MME. The null hypothesis is rejected for large values of
$\widehat{\Delta}_{M}$.

\item Let \(X,X_1,\ldots,X_n\) be independent and identically distributed (i.i.d.) positive continuous random variables with common distribution function \(F\). Then, for every integer \(k\) such that \(2\leq k\leq n\),
~~$X^{1/k}\stackrel{d}{=}\min(X_1,\ldots,X_k)$
holds if and only if \(F\) is the Pareto distribution.
Based on this characterization, \cite{allison2022distribution} proposed three GoF tests for the Pareto distribution, with test statistics given by
             \begin{align*}
                  I_{n,k}&= \int_1^\infty \Delta_{n,k}(x)dF_n(x),\\
             \end{align*}
             where the discrepancy corresponding to \(X^{1/k}\) is defined as 
             \begin{small}
             \begin{equation*}
                 \Delta_{n,k}(x)= \dfrac{1}{n}\sum_{j=1}^n I \Big\{X_{j}^{k^{-1}} \leq x\Big\}- \dfrac{1}{n^k} \sum_{j_1,\ldots, j_k=1}^n I\{\min(X_{j_1}, \ldots, X_{j_k}) \leq x \}. 
             \end{equation*}
              \end{small}
          For the Monte Carlo study, we set the tuning parameter to \(k=2\).

\item Let \(X,X_1,\ldots,X_n\) be i.i.d. non-negative random variables with distribution function \(F\). For any integer \(k\) satisfying \(2\leq k\leq n\), the random variables \(X^{1/k}\) and \(\min\{X_1,\ldots,X_k\}\) have the same distribution if and only if, for every \(t\in\mathcal{R}\),
        \begin{equation*}
            E\Big\{ \dfrac{1}{k} \exp \Big(-it X^{k^{-1}}\Big) -[1-F(X)]^{k-1} \exp(-itX)  \Big\}= 0,
        \end{equation*}
where \(F\) denotes the Pareto distribution function given in (\ref{pdf:Pareto}). Based on this characterization, \cite{ngatchou2024classes} proposed the test statistic
\begin{equation*}
    T_{k,n,w}= \int_{\mathcal{R}} |S_{k,n,\widehat{\alpha}_n}(t) |^2 w(t) dt,
\end{equation*}
here for all $t\in\mathcal{R}$,
\begin{equation*}
    S_{k,n,\alpha}(t)= \dfrac{1}{\sqrt{n}}\sum_{j=1}^n \Big[ \dfrac{1}{k}\exp(-itX_j^{k^{-1}})-X_j^{-\alpha(k-1)}\exp(-itX_j)\Big].
\end{equation*} 
     In the practical application, we choose $a=0.5$ for the weight function. The two test statistics are defined as
\begin{enumerate}
    \item \(T_{k,a}^{(1)}\) corresponds to \(w(t)=e^{-a|t|}\) (Laplace weight).
\item \(T_{k,a}^{(2)}\) corresponds to \(w(t)=e^{-at^2}\) (normal weight).
\end{enumerate}

  \item Based on likelihood ratio, \cite{zhang2002powerful} proposed two tests with test statistics given by
        \begin{equation*}
            ZA= -\sum_{j=1}^n \Bigg[\dfrac{\log \big(F(X_{(j)})\big)}{n-j+0.5}+ \dfrac{\log \big(1-F(X_{(j)})\big)}{j-0.5} \Bigg]
        \end{equation*}
and 
\begin{equation*}
            ZB= \sum_{j=1}^n \Bigg[ \log \Bigg( \dfrac{ \big(F(X_{(j)}) \big)^{-1} -1 }{  (n-0.5)/(j-0.75)-1}   \Bigg) \Bigg]^2,
        \end{equation*}
where \(X_{(i)}\) denotes the \(i\)-th order statistic of the random sample \(X_1,\ldots,X_n\) from \(F\).
\item  Kolmogorov-Smirnov (KS) test statistic 
\begin{equation*}
KS=\sup_{x\geq1}\big|F_n(x)-F(x)\big|,
\qquad
~~\text{where}~~F_n(x)=\frac{1}{n}\sum_{i=1}^{n} I(X_i\leq x).
\end{equation*}
  \item Anderson-Darling (AD) test statistic
    \begin{equation*}
        AD= \int \dfrac{\big(F_n(x) - {F}(x)\big)^2}{F(x)(1-F(x))}dF(x). 
    \end{equation*}

In terms of the order statistics, the above test statistic can be expressed 
     \begin{equation*}
         AD= -n-\dfrac{1}{n}\sum_{i=1}^n (2i-1) \big[\log \big(F(X_{(i)})\big)+ \log (1- F(X_{(n+1-i)})) \big].
     \end{equation*}

    \item Cramér--von Mises (CvM) test statistic,
 
 \hspace{2.7cm}    $CvM$= $\int \big(F_n(x)-{F}(x) \big)^2 dF(x)$.

In terms of the order statistics, the above test statistic can be expressed 
     
    \begin{equation*}
        CvM= \sum_{i=1}^n \Big[ F(X_{(i)})- \dfrac{2i-1}{2n} \Big]^2 +\dfrac{1}{12n}, 
    \end{equation*}
where $X_{(j)}$ denotes the order statistics.

\end{itemize}

For each test, the empirical critical value is estimated from \(10{,}000\) Monte Carlo replications at the \(0.05\) significance level.
The shape parameter of the Pareto distribution is estimated using either the MLE or MME, given by \(\widehat{\alpha}=n/\sum_{i=1}^{n}\log X_i\) and \(\widehat{\alpha}=\bar{X}/(\bar{X}-1)\), respectively.
To estimate the empirical type I error and powers of the considered tests, sample sizes of $n = 25, 30, 50, 75$, and $100$ are used. A wide range of alternative distributions is considered and presented in Table~\ref{table:1} for power comparison purposes. Samples are generated from the Pareto distribution in Section~\ref{pdf:Pareto} for different values of the shape parameter to evaluate the empirical size of the proposed test. The empirical size and power of the proposed test $\widehat{\Delta}$, together with several existing GoF tests, at the 5\% significance level are presented in Tables~\ref{table:size}--\ref{table:power6}. All calculations and simulations were performed in \texttt{R} software \cite{team2020ra}.

The Monte Carlo simulation results (Tables~\ref{table:size}--\ref{table:power6}) show that $\widehat{\Delta}$ preserves the nominal type-I error rate under the null Pareto distribution for all sample sizes and parameter values considered. Under the alternative distributions listed in Table~\ref{table:1}, including gamma, log-logistic, Lévy, tilted Pareto, exponential, Benini, log-Weibull, and Dhillon, the proposed test exhibits good empirical power that generally increases with sample size. The highest empirical power among the tests considered is highlighted in bold in each row of Tables ~\ref{table:power3}--\ref{table:power6}. 
Overall, from reported simulation results, $\widehat{\Delta}$ performs competitively with the existing GoF tests across a broad range of alternatives, although some competing tests exhibit comparable or superior power for certain alternatives and parameter-estimation settings. We observe that empirical power increases with sample size, as expected. Table~\ref{tab:com1} summaries the proposed and existing GoF tests in terms of their underlying principles, statistic types, asymptotic null distributions, and computational features. The proposed test is based on an MRL characterization, has a normal asymptotic null distribution, and is simple and fast to compute.
These findings confirm the effectiveness and robustness of the proposed test across most of the alternatives considered.

\begin{small}
\begin{table}[!t]
\centering
\caption{Summary of various choices of alternative distributions.}
\vspace{0.1cm}
\label{table:1} 
\fontsize{9pt}{11pt}\selectfont
\begin{tabular}{p{4cm} p{7cm} p{2cm}} 
\hline 
Distribution & Form of density function & Notation \\[1ex]
\hline 
Gamma &
$\displaystyle \frac{1}{\Gamma(\theta)}
(x-1)^{\theta-1}e^{-(x-1)}$
& $\Gamma(\theta)$ \\[2ex]

Log-logistic &
$\beta
\left({x-1}\right)^{\beta-1}
\left[1+\left({x-1}\right)^{\beta}\right]^{-2}$
& $LL(\beta)$ \\[2ex]

L\'evy &
$\sqrt{\dfrac{c}{2\pi}}
\dfrac{\exp\!\left[-\dfrac{c}{2(x-1)}\right]}
{(x-1)^{3/2}}$

& $L(c)$ \\[2ex]

Tilted Pareto &
$\dfrac{\lambda+1}{(x+\lambda)^2}$
& $TP(\lambda)$ \\[2ex]

Exponential &
$\lambda e^{-\lambda (x-1)}$
& $E(\lambda)$ \\[2ex]

Benini &
$\dfrac{e^{-\theta (\log x)^2}}{x^2}(1+2\theta \log x)$
& $BN(\theta)$ \\[2ex]

Log-Weibull &
$\dfrac{(1+\theta)(\log x)^{\theta}}{x^{2+\theta}}$
& $LW(\theta)$ \\[2ex]

Dhillon &
$\dfrac{\theta+1}{x}
\exp\!\left\{-(\log x)^{\theta+1}\right\}
(\log x)^{\theta},\quad$
& $DH(\theta)$ \\[2ex]
\hline
\end{tabular}
\end{table}
\end{small}

\begin{table}[!t]
\centering
\caption{Empirical type I error rates at $0.05$ level of significance}
\label{table:size}
\resizebox{16cm}{!}{
\begin{tabular}{c c c c c c c c c c c  c c  }
\hline
\multicolumn{13}{c}{\textbf{Maximum Likelihood Estimator (MLE)}}\\
\hline
Null & $n$ & $\widehat{\Delta}$ & KS & AD & CvM & $G_I$ &$G_M$ &   $T_{2,0.5}^{(1)}$ &  $T_{2,0.5}^{(2)}$ &$ I_{n,2}$  & $ZA$  & $ZB$ \\
\hline
\multirow{5}{*}{$P(1.5)$}
& 25 & 0.047 & 0.053 & 0.052 & 0.050 & 0.052 & 0.035 & 0.049 & 0.051 & 0.054 & 0.053 & 0.052 \\
& 30 & 0.052 & 0.058 & 0.059 & 0.064 & 0.058 & 0.061 & 0.052 & 0.053 & 0.044 & 0.054 & 0.059 \\
& 50 & 0.055 & 0.051 & 0.050 & 0.051 & 0.055 & 0.034 & 0.052 & 0.048 & 0.053 & 0.052 & 0.049 \\
& 75 & 0.053 & 0.053 & 0.048 & 0.048 & 0.064 & 0.046 & 0.051 & 0.049 & 0.056 & 0.065 & 0.060 \\
& 100 & 0.049 & 0.052 & 0.051 & 0.052 & 0.047 & 0.049 & 0.052 & 0.058 & 0.049 & 0.051 & 0.058 \\
\hline
\multirow{5}{*}{$P(2.0)$}
& 25 & 0.055 & 0.055 & 0.048 & 0.052 & 0.053 & 0.053 & 0.046 & 0.053 & 0.047 & 0.042 & 0.046 \\
& 30 & 0.052 & 0.050 & 0.049 & 0.055 & 0.052 & 0.054 & 0.053 & 0.053 & 0.047 & 0.050 & 0.050 \\
& 50 & 0.046 & 0.049 & 0.044 & 0.047 & 0.055 & 0.062 & 0.046 & 0.040 & 0.051 & 0.037 & 0.042 \\
& 75 & 0.048 & 0.060 & 0.061 & 0.059 & 0.070 & 0.051 & 0.059 & 0.051 & 0.062 & 0.047 & 0.045 \\
& 100 & 0.052 & 0.043 & 0.042 & 0.043 & 0.052 & 0.064 & 0.049 & 0.045 & 0.055 & 0.051 & 0.044 \\
\hline

\multirow{5}{*}{$P(3)$}
& 25 & 0.058 & 0.050 & 0.060 & 0.058 & 0.054 & 0.051 & 0.052 & 0.047 & 0.042 & 0.049 & 0.050 \\
& 30 & 0.054 & 0.050 & 0.049 & 0.053 & 0.040 & 0.049 & 0.044 & 0.042 & 0.055 & 0.048 & 0.051 \\
& 50 & 0.043 & 0.057 & 0.051 & 0.052 & 0.048 & 0.069 & 0.051 & 0.058 & 0.049 & 0.050 & 0.055 \\
& 75 & 0.054 & 0.062 & 0.059 & 0.054 & 0.052 & 0.053 & 0.051 & 0.048 & 0.054 & 0.051 & 0.049 \\
& 100 & 0.050 & 0.047 & 0.043 & 0.046 & 0.049 & 0.053 & 0.046 & 0.049 & 0.044 & 0.046 & 0.048 \\
\hline
\multicolumn{13}{c}{\textbf{Method of Moments Estimator (MME)}}\\
\hline
\multirow{5}{*}{$P(1.5)$}
& 25 & 0.059 & 0.061 & 0.048 & 0.047 & 0.045 & 0.039 & 0.051 & 0.051 & 0.044 & 0.055 & 0.053 \\
& 30 & 0.046 & 0.063 & 0.063 & 0.058 & 0.048 & 0.034 & 0.045 & 0.039 & 0.046 & 0.051 & 0.048 \\
& 50 & 0.048 & 0.043 & 0.043 & 0.040 & 0.040 & 0.033 & 0.045 & 0.049 & 0.055 & 0.065 & 0.058 \\
& 75 & 0.049 & 0.052 & 0.045 & 0.048 & 0.048 & 0.062 & 0.051 & 0.054 & 0.053 & 0.057 & 0.055 \\
& 100 & 0.053 & 0.053 & 0.047 & 0.049 & 0.057 & 0.047 & 0.044 & 0.040 & 0.051 & 0.053 & 0.053 \\
\hline

\multirow{5}{*}{$P(2.0)$}
& 25 & 0.056 & 0.047 & 0.051 & 0.053 & 0.052 & 0.063 & 0.052 & 0.053 & 0.042 & 0.041 & 0.044 \\
& 30 & 0.045 & 0.056 & 0.058 & 0.054 & 0.038 & 0.062 & 0.053 & 0.046 & 0.045 & 0.058 & 0.061 \\
& 50 & 0.049 & 0.065 & 0.056 & 0.058 & 0.054 & 0.055 & 0.047 & 0.054 & 0.056 & 0.049 & 0.049 \\
& 75 & 0.054 & 0.049 & 0.049 & 0.049 & 0.054 & 0.051 & 0.059 & 0.053 & 0.051 & 0.054 & 0.058 \\
& 100 & 0.055 & 0.048 & 0.044 & 0.040 & 0.050 & 0.066 & 0.049 & 0.053 & 0.053 & 0.043 & 0.046 \\
\hline

\multirow{5}{*}{$P(3)$}
& 25 & 0.052 & 0.060 & 0.061 & 0.060 & 0.063 & 0.055 & 0.049 & 0.046 & 0.049 & 0.046 & 0.043 \\
& 30 & 0.050 & 0.041 & 0.047 & 0.043 & 0.046 & 0.050 & 0.046 & 0.049 & 0.054 & 0.059 & 0.063 \\
& 50 & 0.059 & 0.047 & 0.045 & 0.048 & 0.039 & 0.050 & 0.054 & 0.054 & 0.050 & 0.050 & 0.050 \\
& 75 & 0.044 & 0.048 & 0.051 & 0.050 & 0.051 & 0.062 & 0.058 & 0.066 & 0.044 & 0.056 & 0.053 \\
& 100 & 0.052 & 0.047 & 0.044 & 0.047 & 0.050 & 0.050 & 0.054 & 0.055 & 0.057 & 0.055 & 0.056 \\
\hline
\end{tabular}}
\end{table}

\begin{table}[!t]
\centering
\caption{Empirical power at $0.05$ level of significance based on MLE}
\label{table:power3}
\resizebox{16cm}{!}{
\begin{tabular}{c c c c c c c c c c c  c c  }
\hline
Alternative & $n$ & $\widehat{\Delta}$ & KS & AD & CvM & $G_I$ &$G_M$ &   $T_{2,0.5}^{(1)}$ &  $T_{2,0.5}^{(2)}$ &$ I_{n,2}$ & $ZA$  & $ZB$ \\
\hline
\multirow{5}{*}{$\Gamma(1)$}
& 25 & 0.309 & 0.283 & 0.326 & 0.351 & 0.444 & 0.431 & 0.405 & \textbf{0.515} & 0.338 & 0.348 & 0.354 \\
& 30 & 0.435 & 0.376 & 0.376 & 0.442 & 0.541 & 0.486 & 0.469 & \textbf{0.551} & 0.350 & 0.414 & 0.419 \\
& 50 & 0.666 & 0.565 & 0.611 & 0.651 & 0.740 & 0.726 & 0.690 & \textbf{0.823} & 0.584 & 0.668 & 0.628 \\
& 75 & 0.818 & 0.767 & 0.829 & 0.851 & 0.913 & 0.890 & 0.858 & \textbf{0.938} & 0.761 & 0.863 & 0.836 \\
& 100 & 0.927 & 0.869 & 0.948 & 0.952 & 0.962 & 0.979 & 0.947 & \textbf{0.990} & 0.894 & 0.968 & 0.964 \\
\hline
\multirow{5}{*}{$\Gamma(1.2)$}
& 25 & 0.766 & 0.534 & 0.608 & 0.666 & 0.768 & 0.754 & 0.740 & \textbf{0.847} & 0.621 & 0.655 & 0.662 \\
& 30 & 0.863 & 0.620 & 0.740 & 0.759 & 0.854 & 0.840 & 0.769 & \textbf{0.885} & 0.712 & 0.783 & 0.772 \\
& 50 & 0.965 & 0.891 & 0.942 & 0.951 & 0.964 & 0.961 & 0.927 & \textbf{0.984} & 0.903 & 0.955 & 0.948 \\
& 75 & \textbf{1.000} & 0.964 & 0.993 & 0.993 & 0.998 & 0.999 & 0.992 & 0.999 & 0.985 & 0.998 & 0.997 \\
& 100 & \textbf{1.000} & 0.996 & \textbf{1.000} & \textbf{1.000} & \textbf{1.000} & \textbf{1.000} & \textbf{1.000} & \textbf{1.000} & 0.999 & \textbf{1.000} & \textbf{1.000} \\
\hline
\multirow{5}{*}{$TP(0.5)$}
& 25 & \textbf{0.994} & 0.073 & 0.076 & 0.082 & 0.107 & 0.136 & 0.344 & 0.384 & 0.086 & 0.110 & 0.101 \\
& 30 & \textbf{0.997} & 0.083 & 0.078 & 0.092 & 0.115 & 0.137 & 0.321 & 0.415 & 0.085 & 0.095 & 0.102 \\
& 50 & \textbf{0.999} & 0.115 & 0.114 & 0.139 & 0.150 & 0.146 & 0.389 & 0.469 & 0.108 & 0.122 & 0.105 \\
& 75 & \textbf{1.000} & 0.154 & 0.151 & 0.163 & 0.170 & 0.182 & 0.412 & 0.489 & 0.161 & 0.158 & 0.142 \\
& 100 & \textbf{1.000} & 0.158 & 0.162 & 0.176 & 0.211 & 0.242 & 0.427 & 0.532 & 0.192 & 0.188 & 0.192 \\
\hline

\multirow{5}{*}{$TP(0.8)$}
& 25 & \textbf{1.000} & 0.085 & 0.089 & 0.103 & 0.175 & 0.143 & 0.404 & 0.515 & 0.108 & 0.149 & 0.137 \\
& 30 & \textbf{0.999} & 0.117 & 0.098 & 0.128 & 0.127 & 0.157 & 0.398 & 0.527 & 0.113 & 0.142 & 0.124 \\
& 50 & \textbf{1.000} & 0.172 & 0.163 & 0.188 & 0.212 & 0.303 & 0.496 & 0.598 & 0.190 & 0.170 & 0.175 \\
& 75 & \textbf{1.000} & 0.211 & 0.233 & 0.266 & 0.284 & 0.329 & 0.496 & 0.617 & 0.252 & 0.272 & 0.251 \\
& 100 & \textbf{1.000} & 0.286 & 0.297 & 0.317 & 0.358 & 0.375 & 0.592 & 0.730 & 0.347 & 0.335 & 0.302 \\
\hline

\multirow{5}{*}{$LL(1.2)$}
& 25 & \textbf{0.915} & 0.107 & 0.102 & 0.120 & 0.158 & 0.185 & 0.297 & 0.337 & 0.126 & 0.169 & 0.140 \\
& 30 & \textbf{0.940} & 0.119 & 0.121 & 0.130 & 0.170 & 0.198 & 0.299 & 0.349 & 0.152 & 0.213 & 0.203 \\
& 50 & \textbf{0.978} & 0.182 & 0.218 & 0.227 & 0.301 & 0.272 & 0.385 & 0.456 & 0.235 & 0.279 & 0.263 \\
& 75 & \textbf{0.987} & 0.255 & 0.331 & 0.321 & 0.378 & 0.405 & 0.471 & 0.544 & 0.325 & 0.354 & 0.336 \\
& 100 & \textbf{0.993} & 0.289 & 0.412 & 0.388 & 0.459 & 0.465 & 0.527 & 0.599 & 0.447 & 0.444 & 0.427 \\
\hline

\multirow{5}{*}{$LL(1.5)$}
& 25 & \textbf{0.938} & 0.436 & 0.428 & 0.464 & 0.498 & 0.608 & 0.612 & 0.662 & 0.483 & 0.564 & 0.539 \\
& 30 & \textbf{0.969} & 0.498 & 0.547 & 0.587 & 0.624 & 0.662 & 0.657 & 0.723 & 0.560 & 0.676 & 0.650 \\
& 50 & \textbf{0.989} & 0.697 & 0.766 & 0.754 & 0.860 & 0.852 & 0.800 & 0.867 & 0.809 & 0.818 & 0.803 \\
& 75 & \textbf{0.995} & 0.893 & 0.955 & 0.942 & 0.961 & 0.963 & 0.935 & 0.965 & 0.947 & 0.959 & 0.955 \\
& 100 & \textbf{0.997} & 0.955 & 0.987 & 0.981 & 0.979 & 0.985 & 0.989 & 0.991 & 0.991 & 0.989 & 0.988 \\
\hline
\multirow{5}{*}{$DH(0.5)$}
& 25 & \textbf{0.980} & 0.483 & 0.533 & 0.574 & 0.697 & 0.664 & 0.678 & 0.777 & 0.576 & 0.624 & 0.591 \\
& 30 & \textbf{0.994} & 0.530 & 0.645 & 0.643 & 0.796 & 0.760 & 0.729 & 0.864 & 0.693 & 0.711 & 0.700 \\
& 50 & \textbf{0.999} & 0.824 & 0.915 & 0.908 & 0.937 & 0.944 & 0.912 & 0.963 & 0.900 & 0.924 & 0.924 \\
& 75 & \textbf{1.000} & 0.924 & 0.979 & 0.976 & 0.992 & 0.991 & 0.976 & 0.992 & 0.985 & 0.984 & 0.987 \\
& 100 & \textbf{1.000} & 0.980 & 0.996 & 0.995 & 0.999 & 0.999 & 0.996 & \textbf{1.000} & 0.999 & 0.998 & 0.998 \\
\hline
\multirow{5}{*}{$DH(0.8)$}
& 25 & \textbf{0.998} & 0.805 & 0.886 & 0.888 & 0.959 & 0.940 & 0.919 & 0.963 & 0.892 & 0.926 & 0.925 \\
& 30 & \textbf{1.000} & 0.898 & 0.962 & 0.964 & 0.977 & 0.968 & 0.955 & 0.986 & 0.954 & 0.966 & 0.970 \\
& 50 & \textbf{1.000} & 0.984 & 0.997 & 0.997 & \textbf{1.000} & \textbf{1.000} & 0.998 & \textbf{1.000} & 0.999 & 0.997 & 0.998 \\
& 75 & \textbf{1.000} & 0.999 & \textbf{1.000} & \textbf{1.000} & \textbf{1.000} & \textbf{1.000} & \textbf{1.000} & \textbf{1.000} & \textbf{1.000} & \textbf{1.000} & \textbf{1.000} \\
& 100 & \textbf{1.000} & \textbf{1.000} & \textbf{1.000} & \textbf{1.000} & \textbf{1.000} & \textbf{1.000} & \textbf{1.000} & \textbf{1.000} & \textbf{1.000} & \textbf{1.000} & \textbf{1.000} \\
\hline
\end{tabular}}
\end{table}

\begin{table}[!t]
\centering
\caption{Empirical power at $0.05$ level of significance based on MLE}
\label{table:power4}
\resizebox{16cm}{!}{
\begin{tabular}{c c c c c c c c c c c c c c }
\hline
Alternative & $n$ & $\widehat{\Delta}$ & KS & AD & CvM & $G_I$ &$G_M$ &   $T_{2,0.5}^{(1)}$ &  $T_{2,0.5}^{(2)}$ &$ I_{n,2}$ & $ZA$  & $ZB$ \\
\hline

\multirow{5}{*}{$L(2)$}
& 25 & \textbf{1.000} & 0.252 & 0.300 & 0.273 & 0.307 & 0.477 & 0.878 & 0.937 & 0.336 & 0.686 & 0.561 \\
& 30 & \textbf{1.000} & 0.314 & 0.397 & 0.363 & 0.395 & 0.561 & 0.877 & 0.930 & 0.385 & 0.816 & 0.665 \\
& 50 & \textbf{1.000} & 0.526 & 0.682 & 0.528 & 0.632 & 0.855 & 0.958 & 0.980 & 0.668 & 0.986 & 0.954 \\
& 75 & \textbf{1.000} & 0.832 & 0.942 & 0.821 & 0.885 & 0.960 & 0.999 & 0.999 & 0.899 & \textbf{1.000} & 0.999 \\
& 100 & \textbf{1.000} & 0.955 & 0.988 & 0.934 & 0.971 & 0.996 & \textbf{1.000} & \textbf{1.000} & 0.964 & \textbf{1.000} & \textbf{1.000} \\
\hline
\multirow{5}{*}{$L(3)$}
& 25 & \textbf{1.000} & 0.509 & 0.597 & 0.571 & 0.598 & 0.762 & 0.955 & 0.984 & 0.597 & 0.897 & 0.797 \\
& 30 & \textbf{1.000} & 0.606 & 0.701 & 0.648 & 0.698 & 0.871 & 0.965 & 0.994 & 0.692 & 0.963 & 0.891 \\
& 50 & \textbf{1.000} & 0.898 & 0.964 & 0.921 & 0.949 & 0.987 & 0.999 & \textbf{1.000} & 0.946 & \textbf{1.000} & \textbf{1.000} \\
& 75 & \textbf{1.000} & 0.993 & 0.999 & 0.988 & 0.997 & \textbf{1.000} & \textbf{1.000} & \textbf{1.000} & 0.996 & \textbf{1.000} & \textbf{1.000} \\
& 100 & \textbf{1.000} & \textbf{1.000} & \textbf{1.000} & \textbf{1.000} & \textbf{1.000} & \textbf{1.000} & \textbf{1.000} & \textbf{1.000} & \textbf{1.000} & \textbf{1.000} & \textbf{1.000} \\
\hline
\multirow{5}{*}{$E(0.5)$}
& 25 & \textbf{0.991} & 0.518 & 0.611 & 0.651 & 0.739 & 0.718 & 0.769 & 0.893 & 0.610 & 0.646 & 0.605 \\
& 30 & \textbf{0.997} & 0.617 & 0.737 & 0.791 & 0.799 & 0.791 & 0.810 & 0.930 & 0.710 & 0.744 & 0.735 \\
& 50 & \textbf{1.000} & 0.890 & 0.940 & 0.948 & 0.955 & 0.953 & 0.954 & 0.995 & 0.899 & 0.938 & 0.916 \\
& 75 & \textbf{1.000} & 0.967 & 0.995 & 0.997 & 0.997 & 0.998 & 0.995 & \textbf{1.000} & 0.980 & 0.997 & 0.994 \\
& 100 & \textbf{1.000} & 0.994 & \textbf{1.000} & \textbf{1.000} & \textbf{1.000} & 0.999 & 0.999 & \textbf{1.000} & 0.996 & \textbf{1.000} & \textbf{1.000} \\
\hline
\multirow{5}{*}{$E(0.8)$}
& 25 & \textbf{0.701} & 0.404 & 0.423 & 0.494 & 0.566 & 0.500 & 0.537 & 0.672 & 0.392 & 0.453 & 0.442 \\
& 30 & \textbf{0.784} & 0.382 & 0.454 & 0.488 & 0.630 & 0.568 & 0.577 & 0.745 & 0.459 & 0.544 & 0.531 \\
& 50 & \textbf{0.950} & 0.638 & 0.737 & 0.769 & 0.816 & 0.839 & 0.802 & 0.921 & 0.703 & 0.800 & 0.767 \\
& 75 & \textbf{0.990} & 0.844 & 0.932 & 0.940 & 0.956 & 0.946 & 0.943 & 0.986 & 0.878 & 0.960 & 0.936 \\
& 100 & \textbf{0.999} & 0.945 & 0.985 & 0.987 & 0.993 & 0.984 & 0.986 & \textbf{0.999} & 0.946 & 0.993 & 0.990 \\
\hline
\multirow{5}{*}{$BN(0.3)$}
& 25 & \textbf{0.692} & 0.151 & 0.138 & 0.170 & 0.179 & 0.189 & 0.255 & 0.322 & 0.142 & 0.147 & 0.133 \\
& 30 & \textbf{0.725} & 0.128 & 0.117 & 0.140 & 0.240 & 0.224 & 0.236 & 0.333 & 0.145 & 0.162 & 0.150 \\
& 50 & \textbf{0.888} & 0.212 & 0.226 & 0.253 & 0.275 & 0.282 & 0.334 & 0.462 & 0.252 & 0.269 & 0.250 \\
& 75 & \textbf{0.956} & 0.322 & 0.381 & 0.408 & 0.458 & 0.452 & 0.435 & 0.602 & 0.315 & 0.374 & 0.338 \\
& 100 & \textbf{0.982} & 0.429 & 0.515 & 0.519 & 0.558 & 0.528 & 0.499 & 0.655 & 0.411 & 0.525 & 0.506 \\
\hline

\multirow{5}{*}{$BN(0.7)$}
& 25 & 0.275 & 0.250 & 0.252 & 0.270 & 0.350 & 0.346 & 0.338 & \textbf{0.387} & 0.274 & 0.253 & 0.259 \\
& 30 & 0.421 & 0.271 & 0.295 & 0.345 & 0.341 & 0.364 & 0.336 & \textbf{0.436} & 0.278 & 0.323 & 0.316 \\
& 50 & 0.577 & 0.432 & 0.466 & 0.528 & 0.563 & 0.576 & 0.514 & \textbf{0.630} & 0.446 & 0.523 & 0.501 \\
& 75 & 0.741 & 0.562 & 0.642 & 0.682 & 0.801 & 0.769 & 0.705 & \textbf{0.821} & 0.655 & 0.625 & 0.588 \\
& 100 & 0.886 & 0.747 & 0.831 & 0.850 & 0.883 & 0.872 & 0.792 & \textbf{0.902} & 0.737 & 0.833 & 0.812 \\
\hline

\multirow{5}{*}{$LW(0.5)$}
& 25 & \textbf{0.979} & 0.186 & 0.210 & 0.236 & 0.326 & 0.319 & 0.395 & 0.471 & 0.249 & 0.323 & 0.297 \\
& 30 & \textbf{0.989} & 0.228 & 0.233 & 0.262 & 0.339 & 0.350 & 0.482 & 0.552 & 0.301 & 0.384 & 0.342 \\
& 50 & \textbf{0.997} & 0.353 & 0.427 & 0.417 & 0.560 & 0.536 & 0.567 & 0.668 & 0.484 & 0.543 & 0.504 \\
& 75 & \textbf{1.000} & 0.541 & 0.644 & 0.621 & 0.657 & 0.730 & 0.721 & 0.802 & 0.642 & 0.710 & 0.703 \\
& 100 & \textbf{1.000} & 0.656 & 0.771 & 0.743 & 0.812 & 0.843 & 0.792 & 0.884 & 0.781 & 0.842 & 0.839 \\
\hline
\multirow{5}{*}{$LW(0.7)$}
& 25 & \textbf{0.990} & 0.344 & 0.327 & 0.376 & 0.516 & 0.479 & 0.570 & 0.650 & 0.374 & 0.459 & 0.440 \\
& 30 & \textbf{0.993} & 0.355 & 0.417 & 0.426 & 0.533 & 0.576 & 0.602 & 0.709 & 0.483 & 0.588 & 0.528 \\
& 50 & \textbf{1.000} & 0.578 & 0.680 & 0.654 & 0.789 & 0.775 & 0.754 & 0.839 & 0.706 & 0.789 & 0.769 \\
& 75 & \textbf{1.000} & 0.792 & 0.907 & 0.873 & 0.924 & 0.914 & 0.895 & 0.946 & 0.894 & 0.924 & 0.914 \\
& 100 & \textbf{1.000} & 0.897 & 0.969 & 0.947 & 0.973 & 0.977 & 0.950 & 0.973 & 0.962 & 0.978 & 0.980 \\
\hline
\end{tabular}}
\end{table}

\begin{table}[!t]
\centering
\caption{Empirical power at $0.05$ level of significance based on MME}
\label{table:power5}
\resizebox{16cm}{!}{
\begin{tabular}{c c c c c c c c c c c  c c  }
\hline
Alternative & $n$ & $\widehat{\Delta}$ & KS & AD & CvM & $G_I$ &$G_M$ &   $T_{2,0.5}^{(1)}$ &  $T_{2,0.5}^{(2)}$ &$ I_{n,2}$ & $ZA$  & $ZB$ \\
\hline
\multirow{5}{*}{$\Gamma(1)$}
& 25 & 0.448 & 0.461 & 0.518 & 0.518 & \textbf{0.564} & 0.563 & 0.465 & 0.549 & 0.318 & 0.404 & 0.432 \\
& 30 & 0.473 & 0.553 & 0.602 & 0.609 & \textbf{0.645} & 0.577 & 0.544 & 0.599 & 0.353 & 0.486 & 0.548 \\
& 50 & 0.695 & 0.728 & 0.822 & 0.819 & 0.826 & 0.849 & 0.791 & \textbf{0.872} & 0.607 & 0.706 & 0.769 \\
& 75 & 0.907 & 0.923 & 0.953 & 0.957 & 0.960 & \textbf{0.966} & 0.927 & 0.950 & 0.759 & 0.875 & 0.892 \\
& 100 & 0.981 & 0.959 & 0.986 & 0.987 & 0.989 & 0.987 & 0.987 & \textbf{0.993} & 0.899 & 0.948 & 0.962 \\
\hline
\multirow{5}{*}{$\Gamma(1.2)$}
& 25 & 0.776 & 0.765 & 0.837 & 0.839 & 0.858 & 0.830 & 0.789 & \textbf{0.860} & 0.626 & 0.760 & 0.791 \\
& 30 & 0.886 & 0.856 & 0.902 & 0.906 & 0.915 & 0.904 & 0.862 & \textbf{0.920} & 0.727 & 0.834 & 0.882 \\
& 50 & 0.991 & 0.979 & 0.991 & 0.992 & \textbf{0.994} & 0.986 & 0.981 & 0.992 & 0.916 & 0.967 & 0.978 \\
& 75 & \textbf{1.000} & 0.996 & 0.999 & 0.999 & \textbf{1.000} & \textbf{1.000} & \textbf{1.000} & \textbf{1.000} & 0.989 & 0.998 & \textbf{1.000} \\
& 100 & \textbf{1.000} & \textbf{1.000} & \textbf{1.000} & \textbf{1.000} & \textbf{1.000} & \textbf{1.000} & \textbf{1.000} & \textbf{1.000} & 0.999 & 0.999 & \textbf{1.000} \\
\hline
\multirow{5}{*}{$TP(0.5)$}
& 25 & \textbf{0.965} & 0.518 & 0.629 & 0.551 & 0.309 & 0.312 & 0.299 & 0.391 & 0.076 & 0.507 & 0.560 \\
& 30 & \textbf{0.980} & 0.526 & 0.684 & 0.623 & 0.356 & 0.373 & 0.339 & 0.440 & 0.074 & 0.565 & 0.598 \\
& 50 & \textbf{0.995} & 0.659 & 0.783 & 0.727 & 0.453 & 0.472 & 0.437 & 0.554 & 0.116 & 0.730 & 0.784 \\
& 75 & \textbf{1.000} & 0.759 & 0.874 & 0.831 & 0.578 & 0.631 & 0.493 & 0.637 & 0.142 & 0.818 & 0.866 \\
& 100 & \textbf{1.000} & 0.841 & 0.920 & 0.886 & 0.684 & 0.740 & 0.657 & 0.776 & 0.192 & 0.900 & 0.923 \\
\hline

\multirow{5}{*}{$TP(0.8)$}
& 25 & \textbf{0.994} & 0.634 & 0.781 & 0.722 & 0.438 & 0.484 & 0.360 & 0.530 & 0.112 & 0.697 & 0.721 \\
& 30 & \textbf{0.996} & 0.692 & 0.843 & 0.794 & 0.498 & 0.546 & 0.439 & 0.594 & 0.117 & 0.760 & 0.774 \\
& 50 & \textbf{1.000} & 0.840 & 0.924 & 0.896 & 0.662 & 0.670 & 0.589 & 0.718 & 0.184 & 0.878 & 0.907 \\
& 75 & \textbf{1.000} & 0.938 & 0.976 & 0.970 & 0.808 & 0.848 & 0.766 & 0.864 & 0.256 & 0.975 & 0.979 \\
& 100 & \textbf{1.000} & 0.979 & 0.995 & 0.988 & 0.890 & 0.910 & 0.818 & 0.923 & 0.330 & 0.991 & 0.992 \\
\hline

\multirow{5}{*}{$LL( 1.2)$}
& 25 & \textbf{0.797} & 0.298 & 0.366 & 0.361 & 0.262 & 0.304 & 0.304 & 0.350 & 0.140 & 0.366 & 0.342 \\
& 30 & \textbf{0.822} & 0.331 & 0.451 & 0.429 & 0.367 & 0.317 & 0.318 & 0.372 & 0.165 & 0.396 & 0.413 \\
& 50 & \textbf{0.926} & 0.407 & 0.504 & 0.465 & 0.423 & 0.428 & 0.388 & 0.468 & 0.209 & 0.507 & 0.511 \\
& 75 & \textbf{0.966} & 0.532 & 0.638 & 0.619 & 0.521 & 0.541 & 0.476 & 0.536 & 0.344 & 0.654 & 0.642 \\
& 100 & \textbf{0.974} & 0.621 & 0.705 & 0.684 & 0.624 & 0.586 & 0.576 & 0.611 & 0.412 & 0.701 & 0.703 \\
\hline

\multirow{5}{*}{$LL( 1.5)$}
& 25 & \textbf{0.898} & 0.618 & 0.718 & 0.712 & 0.692 & 0.699 & 0.648 & 0.694 & 0.483 & 0.724 & 0.702 \\
& 30 & \textbf{0.919} & 0.744 & 0.821 & 0.808 & 0.752 & 0.743 & 0.677 & 0.740 & 0.566 & 0.814 & 0.791 \\
& 50 & \textbf{0.972} & 0.882 & 0.922 & 0.903 & 0.905 & 0.892 & 0.884 & 0.890 & 0.825 & 0.949 & 0.939 \\
& 75 & \textbf{0.989} & 0.952 & 0.972 & 0.962 & 0.962 & 0.967 & 0.966 & 0.969 & 0.957 & 0.987 & 0.982 \\
& 100 & 0.990 & 0.982 & 0.992 & 0.988 & 0.993 & \textbf{0.997} & 0.992 & 0.988 & 0.995 & \textbf{0.997} & \textbf{0.997} \\
\hline
\multirow{5}{*}{$DH(0.5)$}
& 25 & \textbf{0.978} & 0.789 & 0.871 & 0.867 & 0.871 & 0.842 & 0.745 & 0.838 & 0.550 & 0.818 & 0.826 \\
& 30 & \textbf{0.981} & 0.855 & 0.921 & 0.913 & 0.901 & 0.902 & 0.823 & 0.911 & 0.653 & 0.897 & 0.899 \\
& 50 & \textbf{1.000} & 0.975 & 0.991 & 0.990 & 0.989 & 0.980 & 0.963 & 0.984 & 0.895 & 0.985 & 0.985 \\
& 75 & \textbf{1.000} & 0.997 & \textbf{1.000} & \textbf{1.000} & 0.998 & 0.999 & 0.996 & \textbf{1.000} & 0.982 & 0.998 & 0.998 \\
& 100 & \textbf{1.000} & \textbf{1.000} & \textbf{1.000} & \textbf{1.000} & \textbf{1.000} & \textbf{1.000} & \textbf{1.000} & \textbf{1.000} & 0.998 & \textbf{1.000} & \textbf{1.000} \\
\hline
\multirow{5}{*}{$DH(0.8)$}
& 25 & \textbf{0.999} & 0.959 & 0.989 & 0.988 & 0.987 & 0.984 & 0.952 & 0.982 & 0.887 & 0.982 & 0.983 \\
& 30 & \textbf{0.998} & 0.982 & 0.995 & 0.995 & 0.995 & 0.995 & 0.981 & 0.996 & 0.948 & 0.995 & 0.996 \\
& 50 & \textbf{1.000} & \textbf{1.000} & \textbf{1.000} & \textbf{1.000} & \textbf{1.000} & \textbf{1.000} & \textbf{1.000} & \textbf{1.000} & 0.998 & \textbf{1.000} & \textbf{1.000} \\
& 75 & \textbf{1.000} & \textbf{1.000} & \textbf{1.000} & \textbf{1.000} & \textbf{1.000} & \textbf{1.000} & \textbf{1.000} & \textbf{1.000} & \textbf{1.000} & \textbf{1.000} & \textbf{1.000} \\
& 100 & \textbf{1.000} & \textbf{1.000} & \textbf{1.000} & \textbf{1.000} & \textbf{1.000} & \textbf{1.000} & \textbf{1.000} & \textbf{1.000} & \textbf{1.000} & \textbf{1.000} & \textbf{1.000} \\
\hline
\end{tabular}}
\end{table}

\begin{table}[!t]
\centering
\caption{Empirical power at $0.05$ level of significance based on MME}
\label{table:power6}
\resizebox{16cm}{!}{
\begin{tabular}{c c c c c c c c c c c c c c }
\hline
Alternative & $n$ & $\widehat{\Delta}$ & KS & AD & CvM & $G_I$ &$G_M$ &   $T_{2,0.5}^{(1)}$ &  $T_{2,0.5}^{(2)}$ &$ I_{n,2}$ & $ZA$  & $ZB$ \\
\hline
\multirow{5}{*}{$L( 2)$}
& 25 & \textbf{1.000} & 0.997 & \textbf{1.000} & \textbf{1.000} & 0.991 & 0.995 & 0.921 & 0.976 & 0.322 & \textbf{1.000} & \textbf{1.000} \\
& 30 & \textbf{1.000} & \textbf{1.000} & \textbf{1.000} & \textbf{1.000} & 0.997 & 0.998 & 0.967 & 0.991 & 0.402 & \textbf{1.000} & \textbf{1.000} \\
& 50 & \textbf{1.000} & \textbf{1.000} & \textbf{1.000} & \textbf{1.000} & \textbf{1.000} & \textbf{1.000} & 0.999 & \textbf{1.000} & 0.654 & \textbf{1.000} & \textbf{1.000} \\
& 75 & \textbf{1.000} & \textbf{1.000} & \textbf{1.000} & \textbf{1.000} & \textbf{1.000} & \textbf{1.000} & \textbf{1.000} & \textbf{1.000} & 0.890 & \textbf{1.000} & \textbf{1.000} \\
& 100 & \textbf{1.000} & \textbf{1.000} & \textbf{1.000} & \textbf{1.000} & \textbf{1.000} & \textbf{1.000} & \textbf{1.000} & \textbf{1.000} & 0.974 & \textbf{1.000} & \textbf{1.000} \\
\hline
\multirow{5}{*}{$L(3)$}
& 25 & \textbf{1.000} & \textbf{1.000} & \textbf{1.000} & \textbf{1.000} & \textbf{1.000} & \textbf{1.000} & 0.992 & 0.999 & 0.605 & \textbf{1.000} & \textbf{1.000} \\
& 30 & \textbf{1.000} & \textbf{1.000} & \textbf{1.000} & \textbf{1.000} & \textbf{1.000} & \textbf{1.000} & 0.997 & \textbf{1.000} & 0.685 & \textbf{1.000} & \textbf{1.000} \\
& 50 & \textbf{1.000} & \textbf{1.000} & \textbf{1.000} & \textbf{1.000} & \textbf{1.000} & \textbf{1.000} & \textbf{1.000} & \textbf{1.000} & 0.940 & \textbf{1.000} & \textbf{1.000} \\
& 75 & \textbf{1.000} & \textbf{1.000} & \textbf{1.000} & \textbf{1.000} & \textbf{1.000} & \textbf{1.000} & \textbf{1.000} & \textbf{1.000} & 0.995 & \textbf{1.000} & \textbf{1.000} \\
& 100 & \textbf{1.000} & \textbf{1.000} & \textbf{1.000} & \textbf{1.000} & \textbf{1.000} & \textbf{1.000} & \textbf{1.000} & \textbf{1.000} & \textbf{1.000} & \textbf{1.000} & \textbf{1.000} \\
\hline
\multirow{5}{*}{$E(0.5)$}
& 25 & \textbf{0.986} & 0.879 & 0.940 & 0.935 & 0.894 & 0.888 & 0.868 & 0.938 & 0.593 & 0.814 & 0.852 \\
& 30 & \textbf{0.998} & 0.921 & 0.956 & 0.948 & 0.942 & 0.923 & 0.907 & 0.968 & 0.676 & 0.884 & 0.900 \\
& 50 & \textbf{1.000} & 0.989 & \textbf{1.000} & 0.999 & 0.995 & 0.993 & 0.993 & 0.999 & 0.894 & 0.993 & 0.997 \\
& 75 & \textbf{1.000} & \textbf{1.000} & \textbf{1.000} & \textbf{1.000} & \textbf{1.000} & \textbf{1.000} & \textbf{1.000} & \textbf{1.000} & 0.983 & \textbf{1.000} & \textbf{1.000} \\
& 100 & \textbf{1.000} & \textbf{1.000} & \textbf{1.000} & \textbf{1.000} & \textbf{1.000} & \textbf{1.000} & \textbf{1.000} & \textbf{1.000} & 0.993 & \textbf{1.000} & \textbf{1.000} \\
\hline

\multirow{5}{*}{$E(0.8)$}
& 25 & \textbf{0.739} & 0.566 & 0.634 & 0.654 & 0.689 & 0.679 & 0.620 & 0.729 & 0.376 & 0.480 & 0.531 \\
& 30 & \textbf{0.814} & 0.692 & 0.758 & 0.751 & 0.787 & 0.757 & 0.653 & 0.765 & 0.437 & 0.591 & 0.644 \\
& 50 & \textbf{0.952} & 0.882 & 0.931 & 0.933 & 0.924 & 0.940 & 0.910 & \textbf{0.952} & 0.701 & 0.848 & 0.886 \\
& 75 & \textbf{0.997} & 0.971 & 0.988 & 0.988 & 0.987 & 0.993 & 0.980 & 0.992 & 0.875 & 0.953 & 0.973 \\
& 100 & 0.999 & 0.996 & \textbf{1.000} & \textbf{1.000} & 0.999 & 0.998 & 0.997 & \textbf{1.000} & 0.953 & 0.994 & 0.996 \\
\hline

\multirow{5}{*}{$BN(0.3)$}
& 25 & \textbf{0.567} & 0.290 & 0.375 & 0.364 & 0.289 & 0.292 & 0.227 & 0.333 & 0.116 & 0.252 & 0.254 \\
& 30 & \textbf{0.623} & 0.326 & 0.405 & 0.414 & 0.360 & 0.348 & 0.299 & 0.378 & 0.145 & 0.252 & 0.271 \\
& 50 & \textbf{0.821} & 0.469 & 0.553 & 0.544 & 0.507 & 0.502 & 0.488 & 0.535 & 0.234 & 0.381 & 0.413 \\
& 75 & \textbf{0.927} & 0.590 & 0.678 & 0.694 & 0.602 & 0.620 & 0.608 & 0.713 & 0.345 & 0.486 & 0.524 \\
& 100 & \textbf{0.965} & 0.728 & 0.801 & 0.802 & 0.740 & 0.767 & 0.696 & 0.780 & 0.379 & 0.679 & 0.739 \\
\hline

\multirow{5}{*}{$BN(0.7)$}
& 25 & 0.348 & 0.387 & 0.403 & 0.429 & \textbf{0.434} & 0.432 & 0.366 & 0.430 & 0.236 & 0.333 & 0.365 \\
& 30 & 0.398 & 0.418 & 0.440 & 0.478 & \textbf{0.539} & 0.492 & 0.401 & 0.495 & 0.309 & 0.369 & 0.387 \\
& 50 & 0.628 & 0.597 & 0.690 & 0.706 & \textbf{0.711} & \textbf{0.711} & 0.561 & 0.654 & 0.437 & 0.545 & 0.566 \\
& 75 & 0.798 & 0.789 & 0.835 & 0.850 & 0.861 & \textbf{0.873} & 0.806 & 0.858 & 0.629 & 0.720 & 0.773 \\
& 100 & 0.928 & 0.877 & 0.931 & 0.936 & 0.919 & 0.934 & 0.903 & \textbf{0.949} & 0.780 & 0.852 & 0.878 \\
\hline

\multirow{5}{*}{$LW(0.5)$}
& 25 & \textbf{0.932} & 0.520 & 0.679 & 0.655 & 0.505 & 0.538 & 0.499 & 0.562 & 0.236 & 0.646 & 0.640 \\
& 30 & \textbf{0.955} & 0.536 & 0.688 & 0.653 & 0.610 & 0.582 & 0.530 & 0.622 & 0.307 & 0.681 & 0.675 \\
& 50 & \textbf{0.989} & 0.708 & 0.823 & 0.804 & 0.805 & 0.770 & 0.671 & 0.736 & 0.453 & 0.881 & 0.874 \\
& 75 & \textbf{0.996} & 0.847 & 0.929 & 0.917 & 0.886 & 0.880 & 0.788 & 0.854 & 0.634 & 0.933 & 0.939 \\
& 100 & \textbf{0.997} & 0.929 & 0.964 & 0.956 & 0.964 & 0.953 & 0.913 & 0.943 & 0.800 & 0.978 & 0.975 \\
\hline
\multirow{5}{*}{$LW(0.7)$}
& 25 & \textbf{0.965} & 0.654 & 0.769 & 0.750 & 0.687 & 0.701 & 0.614 & 0.699 & 0.391 & 0.784 & 0.768 \\
& 30 & \textbf{0.984} & 0.727 & 0.843 & 0.818 & 0.781 & 0.756 & 0.670 & 0.767 & 0.440 & 0.873 & 0.853 \\
& 50 & \textbf{0.995} & 0.872 & 0.935 & 0.921 & 0.935 & 0.902 & 0.843 & 0.891 & 0.710 & 0.954 & 0.955 \\
& 75 & \textbf{0.999} & 0.966 & 0.990 & 0.986 & 0.977 & 0.974 & 0.965 & 0.981 & 0.907 & 0.990 & 0.991 \\
& 100 & \textbf{1.000} & 0.988 & 0.998 & 0.995 & 0.996 & 0.981 & 0.988 & 0.991 & 0.973 & 0.998 & 0.998 \\
\hline
\end{tabular}}
\end{table}

\begin{table}[!t]
\centering
\caption{Summary of the proposed and existing goodness-of-fit tests for the Pareto distribution.}
\label{tab:com1}
\resizebox{17cm}{!}{
\begin{tabular}{p{2cm} p{3cm} p{3.1cm}p{2.4cm} p{6.5cm}}
\hline
\textbf{Test} & \textbf{Underlying principle} & \textbf{Test statistic type} & \textbf{Asymptotic null dist.}  & \textbf{Computational features} \\[1ex]
\hline
$\widehat{\Delta}$ & Mean residual life characterization & $U$-statistic (asymptotic test) & Normal &    Easy to compute; fast. \\
\hline
$\widehat{G}_I$ & Stein's identity characterization & Integral-type statistic & Normal   &  Easy to compute; fast. \\
\hline
$\widehat{G}_M$  & Stein's identity characterization & Cram\'er--von Mises type & Normal   & Slow to compute (triple summation). \\
\hline
$T_{n,a}^{(1)}$
& Weighted characteristic function (CF) approach
& Integrated squared deviation of empirical CF
& Weighted $\chi^2_1$
& Simple to compute; test statistic is computed using closed-form kernel expressions; tuning parameter $a$ controls sensitivity \\[1ex]
\hline
$T_{n,a}^{(2)}$
& Weighted CF approach
& Quadratic functional of empirical CF
& Weighted $\chi^2_1$
& Computationally efficient; significantly faster than bootstrap-based procedures for moderate and large samples; good small-sample behaviour \\[1ex]
\hline
$ I_{n,2}$ & Order statistics characterization & Integral-type statistic & Normal &    Fast to compute. \\
\hline
$ZA$ & Likelihood-ratio test &Empirical Distribution Function (EDF)-based statistic & Non-standard  & Very fast and simple. \\ 
\hline
$ZB$ & Likelihood-ratio test & EDF-based statistic & Non-standard &             Very fast and simple. \\[1ex]
\hline
\end{tabular}}
\label{all_test_table}
\end{table}

\section{Data analysis}\label{sec}
\noindent In this section, we illustrate the application of the proposed test using two real-world datasets.
\subsection{Illustration 1}
\noindent 
We consider the well-known Danish fire insurance claims dataset, one of the most widely studied benchmark datasets in the literature, especially in actuarial. The dataset consists of large fire insurance claims in Denmark from $3$ January $1980$ to $31$ December $1990$ and is
available from the GitHub repository \href{https://github.com/Shivshankarnila/MRL-GOF-Test/tree/main}{Danish Fire Insurance Claims Dataset}. Standard insurance datasets share many common characteristics with the Danish fire losses; consequently, this dataset has become a popular benchmark for evaluating distributional fitting and GoF procedures.
In this analysis, we focus on claim amounts exceeding the threshold of $13.50$ million DKK and standardize these observations by dividing each by $13.50$.
The resulting exceedance sample, having a sample size of $72$, is then used to assess the GoF of the Pareto distribution.

To assess the adequacy of the Pareto model, we have also presented the MRL (mean excess) plot and the empirical distribution function with the fitted Pareto distributions in Figure~\ref{fig:EDF}. The empirical MEP exhibits an approximately linear (increasing) pattern with the threshold, characteristic of heavy-tailed distributions and consistent with the linear MRL behaviour of the Pareto distribution; see \cite{resnick2007heavy}. Figure~\ref{fig:Boxplot} shows a
right-skewed distribution with extreme or high observations above the selected threshold for both datasets. The Pareto fits are obtained using the MLE ($\hat{\alpha}=1.759$) and the MME ($\hat{\alpha}=1.792$). The fitted Pareto distributions based on the MLE and the MME are largely consistent with the empirical distribution.
The $p$-values are reported in Table~\ref{table:pvalue}. They are obtained using a parametric bootstrap method with $B=10{,}000$ replications, where each bootstrap sample is generated from the fitted Pareto distribution with the shape parameter estimated using either the MLE or the MME. The results indicate that none of the tests rejects the null hypothesis, providing no statistical evidence against the assumption that the Danish fire insurance claims exceeding $13.50$ million DKK follow a Pareto distribution.
\subsection{Illustration 2}
\noindent
Air pollution is a growing global concern, with significant adverse effects on public health, including respiratory and cardiovascular diseases, and impacts on overall quality of life \cite{xue2021brain}. Among the major air pollutants, fine particulate matter (PM$_{2.5}$) is of particular concern because of its small particle size, ability to penetrate deep into the respiratory system, and strong association with adverse health effects. PM$_{2.5}$ is also an important component considered in the construction of the Air Quality Index (AQI); motivated by this, we consider daily PM$_{2.5}$ observations in the Delhi region of India. The air quality dataset, available from the GitHub repository \href{https://github.com/Shivshankarnila/MRL-GOF-Test/tree/main}{Delhi Air Quality Dataset}, contains daily AQI measurements along with observations on PM$_{2.5}$, PM$_{10}$, NO$_2$, SO$_2$, CO, and Ozone. Since our interest is in the upper tail of the PM$_{2.5}$ distribution, we consider PM$_{2.5}$ observations for November and December in $2020$ to $2024$, which correspond to the period when the most extreme pollution episodes occur. To focus on extreme pollution events, we select observations exceeding the threshold of $252.52$ and standardize them by defining $Y=X/252.52$, so that $Y\geq1$ for $X\geq252.52$; the resulting exceedance sample, having sample size $26$, is then used to assess the GoF of the Pareto distribution.

The corresponding diagnostic plots are shown in Figure~\ref{fig:EDF}. The fitted Pareto distributions based on the MLE ($\hat{\alpha}_{\mathrm{MLE}}=4.760$) and the MME ($\hat{\alpha}_{\mathrm{MME}}=4.964$) show reasonable agreement with the empirical distribution. The empirical mean excess plot also exhibits an approximately increasing pattern over the lower range of thresholds, broadly consistent with the Pareto model. The $p$-values are reported in Table~\ref{table:pvalue}; none of the tests rejects the null hypothesis, providing no statistical evidence against the Pareto assumption for observations exceeding $252.52$.

\begin{table}[!t]
\centering
\caption{Empirical $p$-values for the real data analysis.}
\label{table:pvalue}
\resizebox{8cm}{!}{
\begin{tabular}{c c c c c c}
\hline
& \multicolumn{2}{c}{Illustration I} &
& \multicolumn{2}{c}{Illustration II} \\
\cline{2-3}\cline{5-6}
Test & MLE & MME & Test & MLE & MME \\
\hline
$\widehat{\Delta}$ & 0.881 & 0.947  & $\widehat{\Delta}$ & 0.336  & 0.350 \\
KS                 & 0.731 & 0.810 & KS                 &  0.308 & 0.282  \\
AD                 &  0.428 & 0.680 & AD                 & 0.061 &  0.073 \\
CvM                &  0.731 & 0.859 & CvM                & 0.338 & 0.322 \\
$G_I$              & 0.772 & 0.774 & $G_I$              &  0.445 & 0.382 \\
$G_M$              & 0.703 &  0.775 & $G_M$              & 0.552 & 0.495 \\
$T_{2,0.5}^{(1)}$  &   0.616 &  0.695 & $T_{2,0.5}^{(1)}$  &  0.465 & 0.415\\
$T_{2,0.5}^{(2)}$  & 0.711 & 0.750 & $T_{2,0.5}^{(2)}$  & 0.441 &  0.396 \\
$ I_{n,2}$          & 0.705 & 0.702 & $ I_{n,2}$          &  0.913 & 0.911 \\
$ZA$             & 0.510 & 0.520 & $ZA$             & 0.854 & 0.847 \\
$ZB$             & 0.363 &  0.383 & $ZB$             & 0.429 & 0.417 \\
\hline
\end{tabular}
}
\end{table}

\begin{figure}[!t]
\centering
\includegraphics[width=1.0\textwidth]{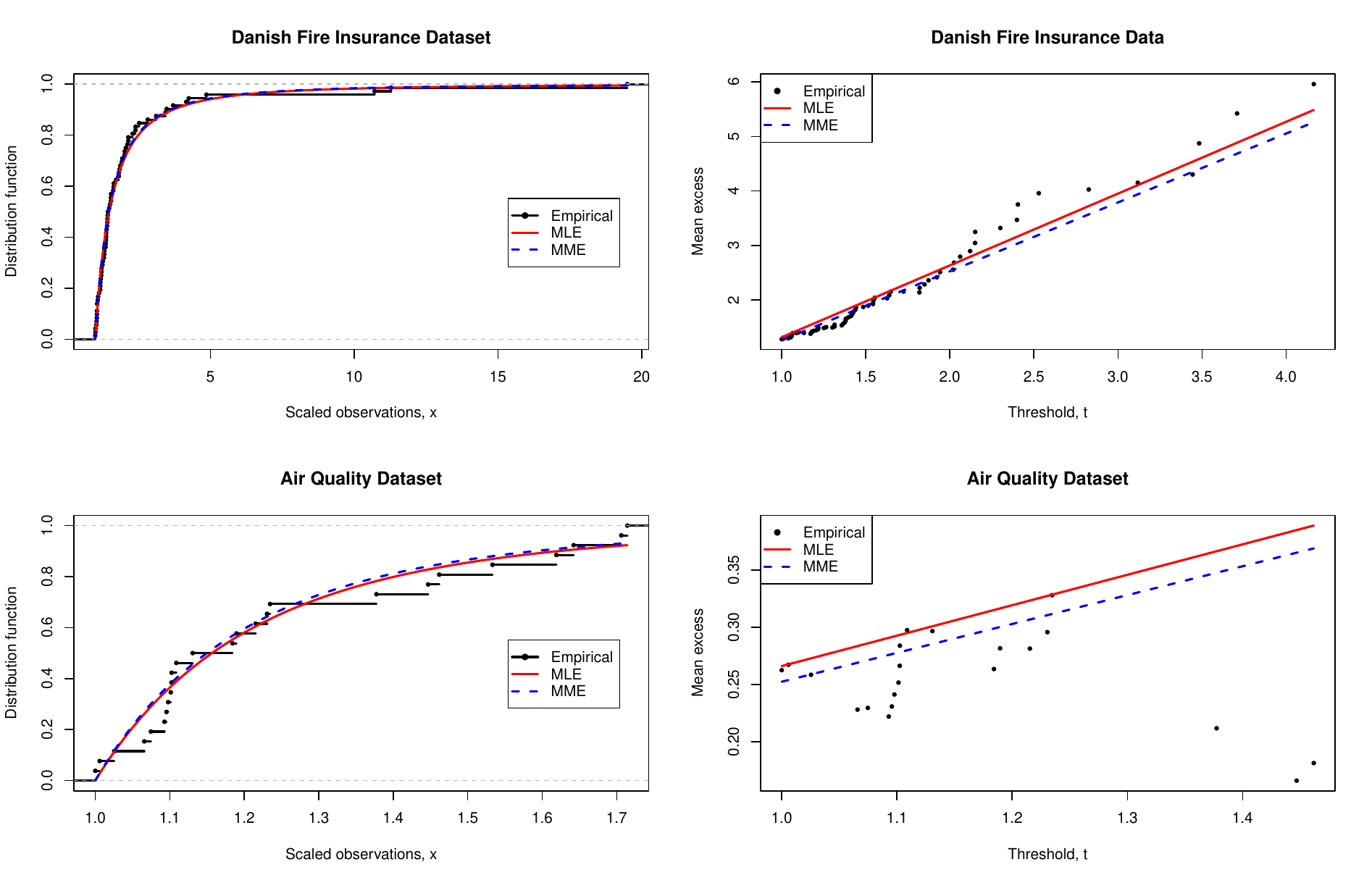}
\caption{Empirical distribution functions and mean excess plots for the scaled above-threshold Danish fire insurance and air quality datasets, together with the fitted Pareto distributions based on the MLE and MME.}
\label{fig:EDF}
\end{figure}
\begin{figure}[!t]
\centering
\includegraphics[width=1.0\textwidth]{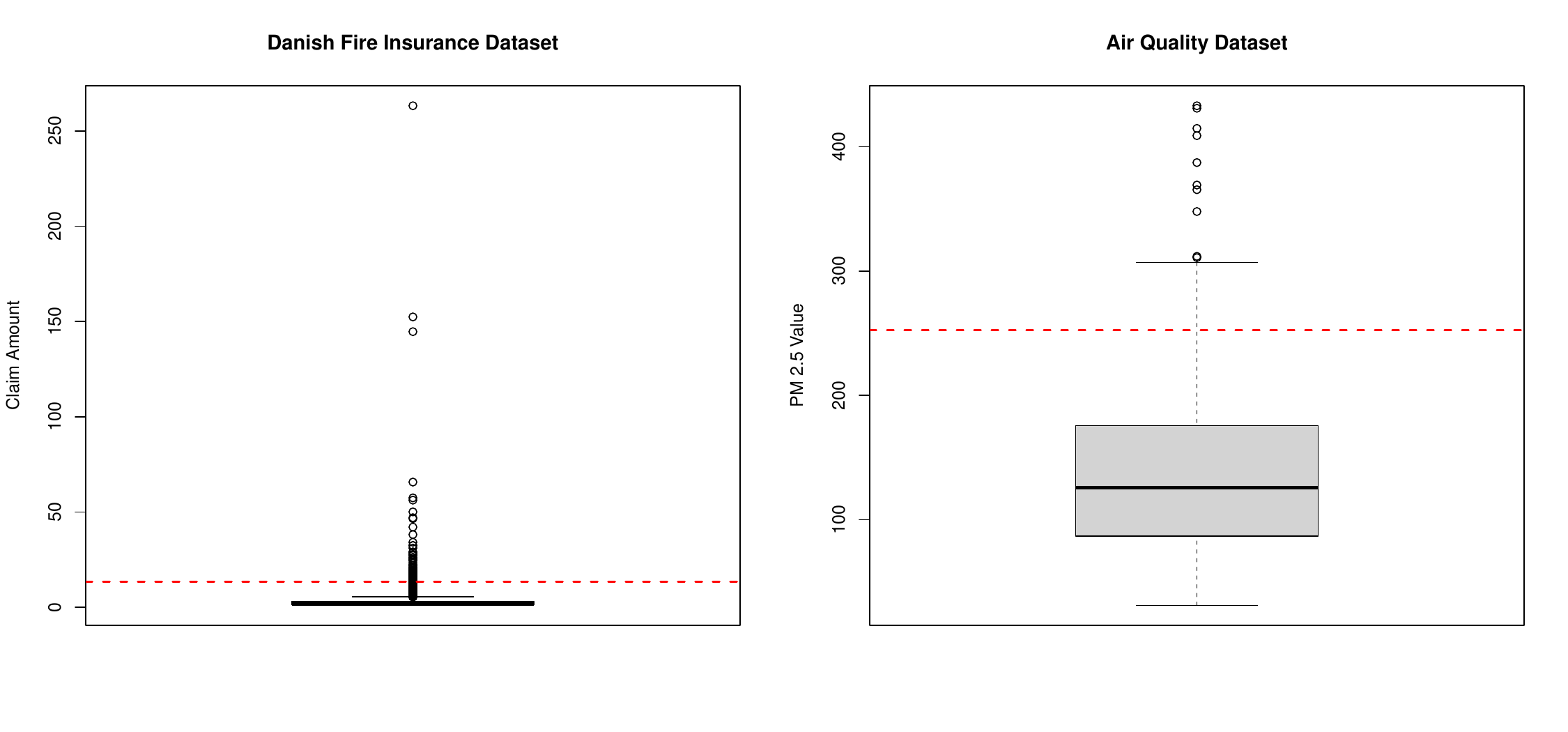}
\caption{Boxplots of the Danish fire insurance and air quality datasets. The dashed horizontal lines indicate the selected thresholds of $13.50$ and $252.52$.}
\label{fig:Boxplot}
\end{figure}
\section{Conclusions and future directions}\label{conclusion}
\noindent In this paper, we proposed a new GoF test for the Pareto distribution based on a characterizations of the MRL function. The asymptotic distribution of the proposed test is obtained using the theory of $U$-statistics.
A comprehensive Monte Carlo simulation study was conducted to evaluate the finite-sample performance of the proposed test across a variety of alternative distributions.
The numerical results show that none of the tests consistently dominates the others or attains the highest power across all alternatives. The findings demonstrate that the proposed test is competitive with existing procedures across the alternatives considered. Finally, we applied the proposed methodology to two important real-world datasets, namely Danish fire claims and Delhi air pollution data, and the results supported the use of the Pareto distribution for the upper tail. 

An interesting direction for future research arises from a fundamental result in Extreme Value Theory (EVT), which states that, under suitable regularity conditions, the distribution of exceedances above a sufficiently high threshold ($u$) can be approximated by a GPD; for more details, see \cite{coles2001introduction,hu2018evmix,scarrott2012review}. A key issue in applying this result is selecting an appropriate threshold for practical EVT applications across a variety of scenarios. This motivates an important GoF problem: given a threshold, or estimated from the data, the question is whether the GPD provides an adequate fit to the exceedances above that threshold. 
In particular, extending the present characterization-based approach to develop a GoF test for the GPD and assess the adequacy of the GPD approximation for threshold exceedances is an interesting direction for future research.
 Another natural direction is the development of GoF procedures for Pareto-type models in which the scale parameter is unknown and must be estimated from the data.
\section*{Declarations}
\noindent\textbf{Funding}\\
The first author acknowledges financial support from the CSIR-UGC (Reference ID Number $191620013874$), Government of India.\\
\noindent \textbf{Conflict of interest} \\
The author declares no conflict of interest.\\
\noindent\textbf{Data availability statement}\\
The datasets used in the current study are available in the GitHub repository: \url{https://github.com/Shivshankarnila/MRL-GOF-Test/tree/main}. \\
 \noindent\textbf{Code availability statement}\\
 The R code is available from the corresponding author upon reasonable request.
 \bibliographystyle{apalike}
 \bibliography{01ref}
\section*{Appendix}
\label{proof:appendix}

\section*{A1. Proof of Theorem \ref{thm1}}
\phantomsection
\makeatletter
\def\@currentlabel{A1}
\makeatother
\label{proof:thm1}
\begin{proof}
To prove the theorem, first consider
\[E\left[\left(X-t-\frac{t}{\alpha-1}\right)I(X>t)\right]=0,\qquad t>1.\]
Then
\[E\left[(X-t)I(X>t)\right]=\frac{t}{\alpha-1}P(X>t).\]
Since
\[E\left[(X-t)I(X>t)\right]
=P(X>t)E(X-t\mid X>t),\]
it follows that
\[E(X-t\mid X>t)=\frac{t}{\alpha-1}.\]
Hence,
\[\begin{aligned}
E(X\mid X>t)
&=t+\frac{t}{\alpha-1}
=\frac{\alpha}{\alpha-1}t
=ct,
\end{aligned}\]
where \(c=\alpha/(\alpha-1)>1\), since \(\alpha>1\) is assumed throughout. By the characterization of the Pareto distribution discussed in
\cite{arnold}, Section~3.9.1, if
\[E(X\mid X>t)=ct,\qquad t>t_0,\]
then
\[X\sim P(I)(t_0,\alpha).\]
Since here \(t_0=1\), we conclude that
\[X\sim P(\alpha).\]

Conversely, suppose that \(X\sim P(\alpha)\), with DF
\[F(x)=1-x^{-\alpha},\qquad x\ge1.\]
The corresponding MRL function is
\[E(X-t\mid X>t)=\frac{t}{\alpha-1},
\qquad t>1.\]
Therefore,
\[E\left[(X-t)I(X>t)\right]
=\frac{t}{\alpha-1}P(X>t),\]
which is equivalent to
\[E\left[\left(X-t-\frac{t}{\alpha-1}\right)I(X>t)\right]=0.\]
This completes the proof.
\end{proof}

\section*{A2. Proof of Theorem \ref{thm2}}
\phantomsection
\makeatletter
\def\@currentlabel{A2}
\makeatother
\label{proof:thm2}
\begin{proof}
To establish the $\widehat{\Delta}$ converges in probability to $\Delta(F)$, we first note that
$U_1$ and $U_2$ are $U$-statistics, they are consistent estimators of
\[\theta_1=E\left(\frac{|X_1-X_2|}{2}\right)
\qquad \text{and} \qquad
\theta_2=E\left(\frac{\min(X_1,X_2)}{2}\right),\]
respectively (see \cite{lehmann1951consistency}). Thus,
\[U_1 \xrightarrow{P} \theta_1
\qquad \text{and} \qquad
U_2 \xrightarrow{P} \theta_2.\]
Recall, as we have,
$\widehat{\alpha}\xrightarrow{P}\alpha$,
 $\alpha>1$, the function $g(x)=1/(x-1)$ is continuous at
$x=\alpha$. Hence, by the continuous mapping theorem (~\citet{van2000asymptotic}, theorem 2.3), we have 
\[\frac{1}{\widehat{\alpha}-1}
\xrightarrow{P}
\frac{1}{\alpha-1}.\]
Therefore, we have
\[\widehat{\Delta}=
U_1-\frac{1}{\widehat{\alpha}-1}U_2
\xrightarrow{P}
\theta_1-\frac{1}{\alpha-1}\theta_2.\]
Since
\[\Delta(F)=\theta_1-\frac{1}{\alpha-1}\theta_2,\]
we obtain
\[\widehat{\Delta}\xrightarrow{P}\Delta(F).\]

\noindent  Hence, $\widehat{\Delta}$ converges in probability to $\Delta(F)$.
\end{proof}

\section*{A3. Proof of Theorem \ref{thm3}}
\phantomsection
\makeatletter
\def\@currentlabel{A3}
\makeatother
\label{proof:thm3}
\begin{proof}
Recall from \eqref{Test_Statistics} that
\[\widehat{\Delta}=U_1-\frac{U_2}{\widehat{\alpha}-1},\]
where $U_1$ and $U_2$ are the $U$-statistics defined in Section~\ref{sec2}, given by
\[\begin{aligned}
U_1 &=
\frac{2}{n(n-1)}
\sum_{i=1}^{n}
\sum_{j=1,j<i}^{n}
\frac{|X_i-X_j|}{2},
\qquad
U_2 &=
\frac{2}{n(n-1)}
\sum_{i=1}^{n}
\sum_{j=1,j<i}^{n}
\frac{\min(X_i,X_j)}{2}.
\end{aligned}\]
Let
$\theta_1=E[h_1(X_1,X_2)]$,
$\theta_2=E[h_2(X_1,X_2)]$, and the corresponding population quantity is
\[\Delta(F)=\theta_1-\frac{\theta_2}{\alpha-1}.\]
We first expand $\widehat{\Delta}-\Delta(F)$, 
\[\begin{aligned}
\widehat{\Delta}-\Delta(F)
&=
U_1-\frac{U_2}{\widehat{\alpha}-1}
-\theta_1+\frac{\theta_2}{\alpha-1}
\\
&=(U_1-\theta_1)
-\frac{U_2}{\widehat{\alpha}-1}
+\frac{\theta_2}{\alpha-1}.
\end{aligned}\]
Now, adding and subtracting $\dfrac{U_2}{\alpha-1}$, we obtain
\[\begin{aligned}
\widehat{\Delta}-\Delta(F)
&=
(U_1-\theta_1)
-\frac{U_2}{\widehat{\alpha}-1}
+\frac{U_2}{\alpha-1}
-\frac{U_2}{\alpha-1}
+\frac{\theta_2}{\alpha-1}
\\
&=
(U_1-\theta_1)+
\left(-\frac{U_2}{\widehat{\alpha}-1}
+\frac{U_2}{\alpha-1}
\right)
+
\left(-\frac{U_2}{\alpha-1}
+\frac{\theta_2}{\alpha-1}
\right)
\\
&=
(U_1-\theta_1)
-\left(\frac{1}{\widehat{\alpha}-1}
-\frac{1}{\alpha-1}
\right)U_2
-\frac{U_2-\theta_2}{\alpha-1}.
\end{aligned}\]
So,
\[
\begin{aligned}
\widehat{\Delta}-\Delta(F)
&=
(U_1-\theta_1)
-\frac{1}{(\alpha-1)}(U_2-\theta_2)
-\left(
\frac{1}{\widehat{\alpha}-1}
-
\frac{1}{\alpha-1}
\right)U_2.
\end{aligned}
\]

Since $\alpha>1$ and $\widehat{\alpha}$ is root-$n$ consistent, a first-order Taylor expansion of
$f(x)=(x-1)^{-1}$ about $\alpha$ gives
\[\frac{1}{\widehat{\alpha}-1}
=\frac{1}{\alpha-1}
- \frac{1}{(\alpha-1)^2}(\widehat{\alpha}-\alpha)
+o_p(n^{-1/2}).\]
Multiplying by $\sqrt{n}$, we obtain
\begin{equation}
\label{eq:h}
\begin{aligned}
\sqrt{n}(\widehat{\Delta}-\Delta(F))
&=
\sqrt{n}(U_1-\theta_1)
-\frac{1}{\alpha-1}\sqrt{n}(U_2-\theta_2)
\\
&\qquad
+\frac{U_2}{(\alpha-1)^2}
\sqrt{n}(\widehat{\alpha}-\alpha)
+o_p(1).
\end{aligned}
\end{equation}

Since $U_2\xrightarrow{P}\theta_2$, by Slutsky's theorem,
\[\begin{aligned}
\sqrt{n}(\widehat{\Delta}-\Delta(F))
&=
\sqrt{n}(U_1-\theta_1)
-\frac{1}{\alpha-1}\sqrt{n}(U_2-\theta_2)
\\
&\qquad
+
\frac{\theta_2}{(\alpha-1)^2}
\sqrt{n}(\widehat{\alpha}-\alpha)
+
o_p(1).
\end{aligned}\]
By the Hoeffding decomposition for U-statistics \cite{lee2019u}, we have
\[\sqrt{n}(U_i-\theta_i)
=\frac{2}{\sqrt n}\sum_{j=1}^{n}
h_i^{(1)}(X_j)
+o_p(1),\qquad i=1,2,\]
where
\[h_i^{(1)}(X_1)
=E\!\left[h_i(X_1,X_2)\mid X_1\right]
-\theta_i,
\qquad i=1,2.\]

Further, under the stated assumptions, the estimator $\widehat{\alpha}$ admits the asymptotic linear representation
\[\sqrt{n}(\widehat{\alpha}-\alpha)
=\frac{1}{\sqrt n}
\sum_{j=1}^{n}
\ell(X_j)
+
o_p(1),\]
where $\ell(X_1)$ is the influence function of $\widehat{\alpha}$. For the MLE,
$\ell(X_1)=\alpha(1-\alpha\log X_1)$.
Hence, substituting the above expressions in \eqref{eq:h}, we obtain
\[\begin{aligned}
\sqrt{n}(\widehat{\Delta}-\Delta(F))
&=
\frac{1}{\sqrt n}
\sum_{j=1}^{n}
\left[
2h_1^{(1)}(X_j)
-\frac{2}{\alpha-1}h_2^{(1)}(X_j)
+\frac{\theta_2}{(\alpha-1)^2}\ell(X_j)
\right]
+o_p(1).
\end{aligned}\]
By the classical central limit theorem, it follows that
\[\sqrt{n}(\widehat{\Delta}-\Delta(F))
\xrightarrow{d}
N(0,\sigma^2),\]
where
\[\sigma^2
=\operatorname{Var}
\left[2h_1^{(1)}(X_1)
-\frac{2}{\alpha-1}h_2^{(1)}(X_1)
+\frac{\theta_2}{(\alpha-1)^2}\ell(X_1)
\right].\]
This completes the proof.
\end{proof}

\section*{A4. Proof of Corollary \ref{corl1}}
\phantomsection
\makeatletter
\def\@currentlabel{A4}
\makeatother
\label{proof:corl1}
\begin{proof}
Under $H_0$, we have $\Delta(F)=0$. From the result established in Theorem~\ref{thm3}, we have
\[\sqrt{n}\,\widehat{\Delta}
=\frac{1}{\sqrt{n}}\sum_{j=1}^{n}
\left[2h_1^{(1)}(X_j)
-\frac{2}{\alpha-1}h_2^{(1)}(X_j)
+\frac{\theta_2}{(\alpha-1)^2}\ell(X_j)
\right]
+o_P(1).\]
Hence, by the classical Central Limit Theorem,
\[\sqrt{n}\,\widehat{\Delta}
\xrightarrow{d}
N(0,\sigma_0^2),\]
where
\[\sigma_0^2
=\operatorname{Var}_{H_0}
\left[2h_1^{(1)}(X_1)
-\frac{2}{\alpha-1}h_2^{(1)}(X_1)
+\frac{\theta_2}{(\alpha-1)^2}\ell(X_1)
\right].\]
Here
\[h_i^{(1)}(X_1)
=E\!\left[h_i(X_1,X_2)\mid X_1\right]-\theta_i,
\qquad i=1,2,\]
with
$\theta_1=E[h_1(X_1,X_2)],~
\theta_2=E[h_2(X_1,X_2)]$,
and $\ell(X_1)$ denotes the influence function of the estimator
$\widehat{\alpha}$. For the MLE,
$\ell(X_1)=\alpha(1-\alpha\log X_1)$.
\noindent This completes the proof.
\end{proof}

\end{document}